\newtheorem{claim}{Claim}
\newenvironment{proof}{\textbf{Proof:}}{\hfill$\Box$}
\newtheorem{theorem}{Theorem}
\newtheorem{definition}{Definition}
\newtheorem{lemma}[theorem]{Lemma}
\newtheorem{corollary}[theorem]{Corollary}
\def\freq{\mathrm{freq}}
\def\depth{\mathrm{dep}}
\def\rank{\mathrm{rank}}
\def\rank{\mathrm{rank}}
\def\guest{\mathrm{guest}}
\def\Right{\mathrm{right}}
\def\Left{\mathrm{left}}
\def\host{\mathrm{host}}
\def\cost{\mathrm{cost}}
\def\ct{\mathrm{CT}}
\def\acc{\mathrm{acc\text{-}}}
\def\adj{\mathrm{adj\text{-}}}
\def\acost{\mathrm{amortized}}
\def\ALG{\textsc{Alg}}
\def\ON{\textsc{On}}
\def\MRU{\textsc{MRU}}
\def\Random-Push{\textsc{Random-Push}}
\def\OPT{\textsc{Opt}}
\def\card#1{\lvert #1 \rvert}
\def\exp{\mathbb{E}}
\def\XX{{\tt X}}
\def\YY{{\tt Y}}
\def\Prb{\mathbb{P}}
\def\Rnum{\mathbb{R}}
\def\Expct{\mathbb{E}}
\def\freq{\mathrm{freq}}
\def\depth{\mathrm{dep}}
\def\rank{\mathrm{rank}}
\def\rank{\mathrm{rank}}
\def\guest{\mathrm{guest}}
\def\Right{\mathrm{right}}
\def\Left{\mathrm{left}}
\def\host{\mathrm{host}}
\def\cost{\mathrm{cost}}
\def\acc{\mathrm{acc\text{-}}}
\def\adj{\mathrm{adj\text{-}}}
\def\acost{\mathrm{amortized}}
\def\ALG{\textsc{Alg}}
\def\ON{\textsc{On}}
\def\MRU{\textsc{MRU}}
\def\ONRAND{\textsc{Random-Push}}
\def\ONDET{\textsc{Move-Half}}
\def\OPT{\textsc{Opt}}
\def\card#1{\lvert #1 \rvert}
\def\exp{\mathbb{E}}
\def\ct{\mathrm{CT}}
\def\XX{{\tt X}}
\def\YY{{\tt Y}}
\def\Prb{\mathbb{P}}
\def\Rnum{\mathbb{R}}
\def\Expct{\mathbb{E}}
\title{Push-Down Trees: Optimal Self-Adjusting Complete Trees}
\author{Chen Avin$^1$ \quad Kaushik Mondal$^2$ \quad Stefan Schmid$^3$\\
{\small $^1$ Ben Gurion University of the Negev, Israel\quad$^2$ Indian Institute of Technology Ropar, India}\\ {\small \quad $^3$ Faculty of Computer Science,
University of Vienna, Austria}
}
\date{}
\begin{document}

\maketitle


\begin{abstract}
This paper studies a fundamental algorithmic
problem related to the design of demand-aware networks:
networks whose topologies adjust toward the traffic
patterns they serve, in an online manner.
The goal is to strike a tradeoff
between the benefits of such adjustments
(shorter routes) and their costs (reconfigurations).
In particular, we consider the problem of designing a
self-adjusting tree network which serves
single-source, multi-destination communication.
The problem has interesting connections to self-adjusting
datastructures.
We present two constant-competitive
online algorithms for this problem, one randomized and one
deterministic.
Our approach is based on
a natural notion of
\emph{Most Recently Used~(MRU)} tree, maintaining
a \emph{working set}.
We prove that the working set is a cost lower
bound for any online algorithm, and then
present a randomized algorithm \ONRAND~which
\emph{approximates} such an MRU tree
at low cost,
by pushing less recently
used communication partners down the tree, along
a random walk. Our deterministic algorithm \ONDET~
does not directly maintain an MRU tree, but
its cost is still proportional to the cost of an MRU tree, and also
matches the working set
lower bound.
\end{abstract}

\section{Introduction}\label{sec:intro}

While datacenter networks traditionally rely on a
\emph{fixed} topology, recent optical technologies
enable \emph{reconfigurable} topologies which
can adjust to the demand (i.e., traffic pattern) they
serve \emph{in an online manner}, e.g.~\cite{sigact19,ghobadi2016projector,firefly,eclipse}.
Indeed, the physical topology is emerging as the next frontier
in an ongoing effort to render networked systems
more flexible.

In principle, such topological reconfigurations can  be used
to provide shorter routes between frequently communicating
nodes, exploiting structure in traffic patterns~\cite{sigmetrics20complexity,kandula2009nature,ghobadi2016projector},
and hence to improve performance.
However, the design of self-adjusting networks which dynamically
optimize themselves toward the demand introduces
an algorithmic challenge: an online
algorithm needs to be devised which guarantees an efficient
tradeoff between the benefits (i.e., shorter route lengths)
and costs (in terms of reconfigurations) of topological
optimizations.

This paper focuses on the design of a self-adjusting \emph{complete tree
($\ct$)} network:
a network of nodes (e.g., servers or racks) that forms a complete tree,
and we measure the routing cost in terms of the length of the shortest path between two nodes.
Trees are not only a most fundamental topological structure of
their own merit, but also a crucial building block for more general
self-adjusting network designs: Avin et al.~\cite{avin2017demand,infocom19dan}
recently showed that multiple
tree networks (optimized individually for a single source node)
can be combined to build general networks which
provide low degree and low distortion.
The design of a dynamic single-source multi-destination communication tree,
as studied in this paper, is hence a stepping stone.

The focus on trees is further motivated by a relationship of
our problem to problems
arising in
self-adjusting datastructures~\cite{avin2019toward}:
self-adjusting datastructures such as self-adjusting
search trees~\cite{splaytrees}
have the appealing property that they optimize
themselves to the workload, leveraging temporal locality,
but without knowing the future. Ideally,
self-adjusting datastructures store items which will be accessed (frequently)
 \emph{in the future}, in a way that they can be accessed quickly
 (e.g., close to the root, in case of a binary search tree),
while also accounting for reconfiguration costs.
However, in contrast to most datastructures, in a \emph{network},
the search property is not required: the network supports \emph{routing}.
Accordingly
our model can be seen as a novel flavor of such self-adjusting
binary search trees
where lookup is supported by a \emph{map}, enabling
shortest path routing (more details will follow).

We present a formal model for this problem later,
but a few observations are easy to
make. If we restrict ourselves to the special case of a
\emph{line} network (a ``linear tree''),
the problem of optimally arranging the destinations
of a given single communication source is equivalent to the
well-known \emph{dynamic list update} problem: for such self-adjusting
(unordered) lists,
dynamically optimal online algorithms have been known for a long time~\cite{list-update-upperbound}.
In particular,
the simple move-to-front algorithm which immediately promotes the accessed item to the front of the list,
fulfills the \emph{Most-Recently Used (MRU)} property:
the $i^{\mathit{th}}$ furthest away item from the front
of the list is the $i^{\mathit{th}}$ most recently used item. In the list (and hence on
the line),
this property is enough to guarantee optimality.
The MRU property is related to the so called \emph{working set property}: the cost of accessing
item $x$ at time $t$ depends on the number of distinct items accessed since the last access
of $x$ prior to time $t$, including $x$.
Naturally, we wonder whether the \emph{MRU} property is
enough to guarantee optimality also in our case. The answer turns out to be non-trivial.

A first contribution of this paper is the observation that if
we count only \emph{access} cost (ignoring any rearrangement cost, see Definition \ref{def:cost} for details),
the answer is affirmative: the  most-recently used tree is what is called \emph{access optimal}.
Furthermore, we show that the corresponding
access cost is a lower bound for any algorithm which is dynamically optimal.
But securing this property,
i.e., maintaining the most-recently used items close to the root in the tree,
introduces a new challenge: how to achieve this
\emph{at low cost}? In particular,
assuming that \emph{swapping} the locations of items
comes at a \emph{unit cost},
can the property be maintained at cost proportional to the \emph{access} cost?
As we show, \emph{strictly} enforcing the most-recently used property in a tree is
too costly to achieve optimality.
But, as we will show, when turning to an \emph{approximate} most-recently used property, we are
able to show two important properties: \emph{i)} such an approximation is good enough
to guarantee access optimality; and \emph{ii)} it can be maintained in
expectation using a \emph{randomized} algorithm:
less recently used communication partners are pushed down the tree
along a random walk.

While the most-recently used property is \emph{sufficient},
it is  not necessary: we provide a deterministic algorithm which is dynamically optimal
but does not even maintain the MRU property approximately.
However, its cost is still proportional to the cost of an MRU tree (Definition \ref{def:mru}).

Succinctly, we make the following \textbf{\emph{contributions}}.
First we show a working set lower bound for our problem.
We do so by proving that an MRU tree is \emph{access optimal}.
In the following theorem, let $WS(\sigma)$ denote the
working set of $\sigma$
(a formal definition will follow later).

\begin{restatable}{theorem}{wslowerbound}\label{th:lowerbd}
Consider a request sequence $\sigma$. Any algorithm $\ALG$
serving $\sigma$ using a self-adjusting complete tree,
has cost at least $\cost(\ALG(\sigma)) \ge WS(\sigma)/4$,
where $WS(\sigma)$ is the working set of $\sigma$.
\end{restatable}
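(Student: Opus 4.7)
The approach is to amortize $\ALG$'s total cost against the access cost of an idealized MRU tree. Since the preceding observation establishes that the MRU tree is access-optimal with access cost equal to $WS(\sigma)$, the theorem reduces to showing $\cost(\ALG(\sigma)) \ge WS(\sigma)/4$ directly. I would split $\cost(\ALG(\sigma)) = A + R$ into access costs $A = \sum_t \depth_t(r_t)$ and reconfiguration (swap) costs $R$, and aim at $A + R \ge \tfrac{1}{4}\sum_t \lfloor \log_2 w_t \rfloor$.

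The combinatorial engine of the proof is a \emph{space bottleneck} for complete binary trees: any set of $k$ distinct items occupies $k$ positions whose depths sum to at least $\sum_{i=1}^{k}\lfloor\log_2 i\rfloor = \Theta(k\log k)$, and for $k$ large enough at least $k/2$ of them must sit at depth $\ge \lfloor\log_2 k\rfloor/4$. Applied to the $w_t$ distinct items of the current working-set window at each time $t$, this says no algorithm can keep all recently accessed items simultaneously near the root.

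Given the bottleneck, I would charge $\lfloor\log_2 w_t\rfloor/4$ units to $\ALG$ at each request $t$. If $\depth_t(r_t) \ge \lfloor\log_2 w_t\rfloor/4$, the access cost at time $t$ already pays the charge. Otherwise $r_t$ is shallow, so by the space bottleneck some other working-set item $y$ sits at depth $\ge \lfloor\log_2 w_t\rfloor/4$ at time $t$; letting $s<t$ be $y$'s access time inside the current window, and using that each adjacent swap changes $y$'s depth by exactly one, we obtain $\depth_s(y) + (\text{swaps touching } y \text{ during }(s,t)) \ge \depth_t(y) \ge \lfloor\log_2 w_t\rfloor/4$, so the charge is paid by $A$ at time $s$ and/or by $R$ in the interval between.

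To turn this per-request argument into a global bound without double counting I would introduce a potential function $\Phi_t$ that penalizes the deviation of $\ALG$'s tree from an MRU configuration, for instance $\Phi_t = \sum_x \max\bigl(0,\ \depth_t(x) - \lfloor\log_2 \rank_t(x)\rfloor\bigr)$ where $\rank_t(x)$ is $x$'s recency rank at time $t$. Since each unit swap changes $\Phi$ by $O(1)$, the case analysis above yields an amortized inequality $a_t + s_t + \Delta\Phi_t \ge \lfloor\log_2 w_t\rfloor/4$, and telescoping together with $\Phi \ge 0$ delivers the theorem. The main obstacle is exactly this amortization: in the naive charging, a single swap performed long ago could be blamed by every later request whose working set happens to contain $y$. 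Choosing the potential $\Phi$ so that the resulting overlap is absorbed by a constant factor is the crux of the argument and is precisely what produces the $1/4$ in the statement.
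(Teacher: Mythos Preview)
Your outline captures the right intuition (the capacity bottleneck of a complete tree forces some working-set item to sit deep), but it stops exactly at the step that carries all the weight. You yourself flag that ``choosing the potential $\Phi$ so that the resulting overlap is absorbed by a constant factor is the crux of the argument,'' and you do not carry it out. The specific potential you propose, $\Phi_t=\sum_x\max\bigl(0,\depth_t(x)-\lfloor\log_2\rank_t(x)\rfloor\bigr)$, is not shown to satisfy the per-step inequality you need: when a request is served, the ranks of \emph{all} items with smaller rank shift by one, so $\Delta\Phi$ on an access is not controlled by the single access cost $a_t$, and you give no argument that the two cases (requested item shallow vs.\ deep) actually close. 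The charging scheme you sketch before introducing $\Phi$ has the same unresolved issue: a single swap on some item $y$ can be blamed by arbitrarily many later requests whose working sets contain $y$, and nothing in the proposal bounds that overcount by a constant.

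The paper sidesteps this entirely. It first proves (Theorem~\ref{thm:mrucomp}) that any algorithm maintaining an MRU tree is $4$ access-competitive against an arbitrary adversary; since the access cost of an MRU tree on $\sigma$ is exactly $WS(\sigma)$, the lower bound $\cost(\ALG(\sigma))\ge WS(\sigma)/4$ is then a two-line contrapositive. The real work sits in Theorem~\ref{thm:mrucomp}, and the potential used there is not yours but a \emph{bad-pair} potential on $\ALG$'s tree: $\Phi=\log\prod_i\bigl(1+\alpha_i/2^{s_i.\depth}\bigr)$, where $\alpha_i$ counts servers $s_j$ that are deeper than $s_i$ yet host a more recently used item. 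The normalization by $2^{s_i.\depth}$ and the multiplicative (log-of-product) form are precisely what make an access contribute at most $2j-k$ and each adversary swap contribute at most a constant, which is where the factor $4$ comes from. Your additive, per-item potential does not have these features, and without them the amortization does not go through.
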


Our main contribution is a deterministic online algorithm $\ONDET$ which maintains
a constant competitive self-adjusting
Complete Tree ($\ct$) network.
\begin{restatable}{theorem}{dynamicopt}\label{th:dynamicopt}
$\ONDET$ algorithm is dynamically optimal.
\end{restatable}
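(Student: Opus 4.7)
The plan is to chain two inequalities: $\cost(\ONDET(\sigma)) = O(WS(\sigma))$ and $WS(\sigma) \le 4 \cdot \cost(\OPT(\sigma))$. The latter is precisely Theorem~\ref{th:lowerbd} applied with $\ALG = \OPT$, so the real work is to bound $\ONDET$'s total cost (access plus reconfiguration) by $O(WS(\sigma))$.

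I would split $\cost(\ONDET(\sigma))$ into an access term and a reconfiguration term, and argue that the reconfiguration term is dominated by the access term. The name ``Move-Half'' indicates that when a node $x$ is accessed at depth $d$, $\ONDET$ performs only $O(d)$ swaps along the root-to-$x$ path, so the reconfiguration cost of a single request is $O(d)$, matching its access cost up to constants. This reduces the task to showing that the access cost alone is $O(WS(\sigma))$.

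For the access-cost bound, I would prove a structural invariant: just before processing any request for a destination $x$, the depth of $x$ in $\ONDET$'s tree is $O(\log W(x))$, where $W(x)$ is the number of distinct destinations requested since $x$'s previous access. Summing these bounds over the request sequence yields exactly the working-set sum (up to constants), giving the desired $O(WS(\sigma))$ access cost. The natural tool is a potential function that charges for ``staleness'' --- roughly, a sum over destinations of a function increasing in current depth and in time since last access --- so that each request's amortized cost is $O(\log W(x))$, while any true cost above this is absorbed by a drop in potential.

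The main obstacle is that, unlike the randomized $\ONRAND$, the deterministic $\ONDET$ does not even approximately maintain the MRU order: as the introduction notes, its cost is comparable to an MRU tree's cost without its configuration being close to an MRU tree. Hence the invariant must be proven \emph{directly} against the working-set window, without leveraging structural proximity to an MRU configuration. Concretely, one must track how a destination $y$'s depth can grow between two consecutive accesses of $y$ --- a ``Move-Half'' operation triggered by some other request $x$ can displace $y$ only by a bounded amount --- and show via amortization that the accumulated displacement between accesses of $y$ is $O(\log W(y))$. Balancing this accounting, so that swap-induced pushdowns of ``innocent'' nodes are all paid for by the potential released at those nodes' own next accesses, is where I expect most of the difficulty to lie.
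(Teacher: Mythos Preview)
Your overall chain is right — bound $\cost(\ONDET)$ by $O(WS(\sigma))$ and then invoke the working-set lower bound — and you correctly note that the adjustment cost of a single $\ONDET$ step is $O(d)$, so only the access cost matters. But your proposed structural invariant, ``just before any request for $x$, its depth is $O(\log W(x))$'', is \emph{false} for $\ONDET$. Suppose $x$ sits at depth $L$ and you request $x$ twice in a row. After the first request, $x$ is at depth $\lfloor L/2\rfloor$, yet its working-set size is $1$, so $\log W(x)=0$; no multiplicative or additive constant saves you when $L$ is large. $\ONDET$ only halves the depth per access, so repeated accesses to the same item incur a geometric tail $L,\lfloor L/2\rfloor,\lfloor L/4\rfloor,\dots$ that is \emph{not} pointwise bounded by $\log W(x)$.

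What you are missing is the one structural fact that makes $\ONDET$ analyzable: when an item $v$ is pushed \emph{down} (because some request at depth $h$ chose $v$ as the interchange partner at depth $\lfloor h/2\rfloor$), $v$ was by construction the \emph{highest-ranked} item on its level, hence $v.\rank\ge 2^{\lfloor h/2\rfloor}$, so in an MRU tree $v$ would already be at depth $\ge\lfloor h/2\rfloor$. The paper exploits this directly (Lemma~\ref{lem:twicedepth} and Claim~\ref{claim:twice}) — no potential function at all. It partitions the requests to each item $u$ into maximal runs between consecutive push-downs of $u$; on each run the $\ONDET$ cost is $\le L+\lfloor L/2\rfloor+\dots\le 2L$, while the MRU cost on the \emph{same} run is $\ge\lfloor L/2\rfloor$ by the rank observation above. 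This yields Theorem~\ref{thm:detvs0} ($\ONDET$ is $4$ access-competitive to MRU), and combining with Theorem~\ref{thm:mrucomp} and the factor~$4$ from adjustment cost gives the $64$-competitive bound. So rather than inventing a ``staleness'' potential, the clean route is: identify the push-down rule's rank guarantee, batch accesses between push-downs, and compare each batch to MRU. Your remark that one must argue ``without leveraging structural proximity to an MRU configuration'' is slightly off — the paper \emph{does} compare to MRU, but only cost-wise, via this rank-at-push-down lemma.
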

Interestingly, $\ONDET$ does not require the
MRU property and hence does not need to maintain MRU tree. This implies that
maintaining a working set on $\ct$s is not a necessary condition for dynamic
optimality, although it is a sufficient one.

Furthermore, we present a dynamically optimal, i.e.,
constant competitive (on expectation) randomized algorithm for self-adjusting
$\ct$s called
\textsc{\ONRAND}. \textsc{\ONRAND} relies on maintaining an approximate MRU tree.
\begin{restatable}{theorem}{pushdowntree}\label{thm:pushdowntree}
The \textsc{\ONRAND} algorithm is dynamically optimal on expectation.
\end{restatable}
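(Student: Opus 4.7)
The strategy is to establish a matching upper bound on the expected cost of $\ONRAND$ and combine it with the working-set lower bound of Theorem~\ref{th:lowerbd}. Concretely, I will show that for every request sequence $\sigma$,
\[
\bE[\cost(\ONRAND(\sigma))] = O(WS(\sigma)),
\]
which together with $\cost(\OPT(\sigma)) \geq WS(\sigma)/4$ yields constant competitiveness in expectation.

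The total cost splits into access cost and rearrangement cost, and I would bound each separately. For the access cost, the key step is to show that $\ONRAND$ maintains an \emph{approximate} MRU tree: at any time $t$, for every item $x$ its expected depth is $O(\log r_t(x))$, where $r_t(x)$ is the number of distinct items accessed since the last access of $x$ (its working-set rank). An exact MRU tree would place $x$ at depth roughly $\lceil \log_2 r_t(x)\rceil+1$, so the bound above certifies an approximate version of the property, which by an argument analogous to the one underlying Theorem~\ref{th:lowerbd} suffices for access optimality in expectation. The mechanism behind the bound is that each access promotes the requested item to the root along a single random root-to-leaf walk, and every other item is displaced down by at most one level per request; hence an item untouched for $k$ subsequent requests is displaced at most $k$ times, and a concentration argument on the random walk bounds its expected depth by $O(\log k)$. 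Summing over the sequence gives expected access cost $O\bigl(\sum_t \log r_t(x_t)\bigr) = O(WS(\sigma))$.

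For the rearrangement cost, I would observe that each request induces exactly one swap per level on the root-to-leaf path used by the push-down, so the rearrangement cost at step $t$ is at most a constant factor times the access cost at step $t$. Summing, the aggregate expected rearrangement cost is also $O(WS(\sigma))$, and the theorem follows.

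The hard part will be the approximate-MRU invariant: one has to argue that the random push-down does not, cumulatively over many requests, systematically misplace an item far below its working-set rank. I anticipate needing either a potential-function argument (tracking a weighted sum of log-depths indexed by recency) or a coupling with an idealized MRU tree, combined with a concentration bound on the random walk showing that the number of times an item is displaced concentrates around its working-set rank. Once this invariant is established, bounding rearrangement by access cost and invoking Theorem~\ref{th:lowerbd} are routine.
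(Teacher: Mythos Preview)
Your high-level strategy coincides with the paper's: establish that $\ONRAND$ keeps each item at expected depth $O(\log(\text{rank}))$ (the paper's Theorem~\ref{th:randompush}), note that per-request rearrangement cost is a fixed constant times access cost (the ratio is $5$ in Algorithm~\ref{alg:move2root}), and compare to $\OPT$ via the working-set bound.

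The gap is in your mechanism sketch. You write that an item untouched for $k$ subsequent requests is displaced at most $k$ times and hence has expected depth $O(\log k)$. But $k$ here is the \emph{number of requests} since the last access, which can be arbitrarily larger than the working-set rank $r$ (e.g.\ one other item requested $k$ times in a row). A bound of $O(\log k)$ in that sense yields only a recency-type guarantee, not $O(\log r)$, and so would not sum to $O(WS(\sigma))$. You do later say you expect the displacement count to ``concentrate around the working-set rank,'' but you give no mechanism for it, and that is precisely the crux. The paper handles it via Lemma~\ref{lem:walk}: the expected number of requests that can \emph{actually} push $v$ down---those for items currently at greater depth than $v$---is at most $2r-1$. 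The argument splits such requests into (i) those for higher-rank items, of which there are at most $r$ since each increments $v$'s rank, and (ii) those for lower-rank items, whose expected count per item is at most $1$ by a monotonicity lemma (Lemma~\ref{lem:monotone}) stating $\exp[D(i)]>\exp[D(j)]$ for $i>j$. With the effective walk length bounded by $O(r)$ in expectation, a Markov-chain stochastic-domination argument plus Jensen (Lemma~\ref{lem:expst}) gives $\exp[D(r)] \le \log r + 3$. A generic potential function or coupling would still need to reproduce this rank-versus-time separation; a concentration bound on the walk alone does not supply it.
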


\begin{figure*}[t]
\ \begin{centering}
  \begin{tabular}[t]{ccc}
   \includegraphics[width=.32\textwidth]{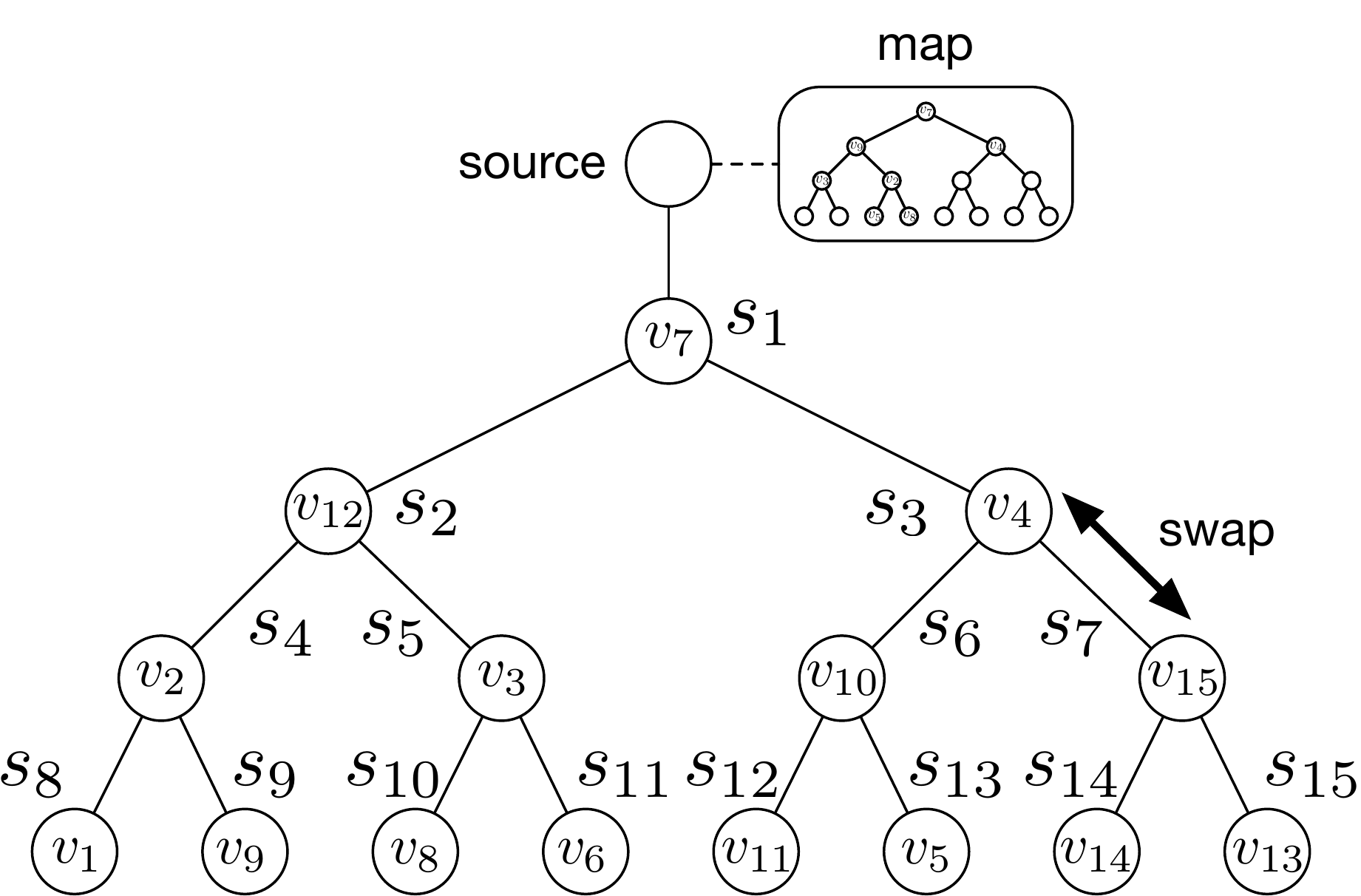} &
   \includegraphics[width=.32\textwidth]{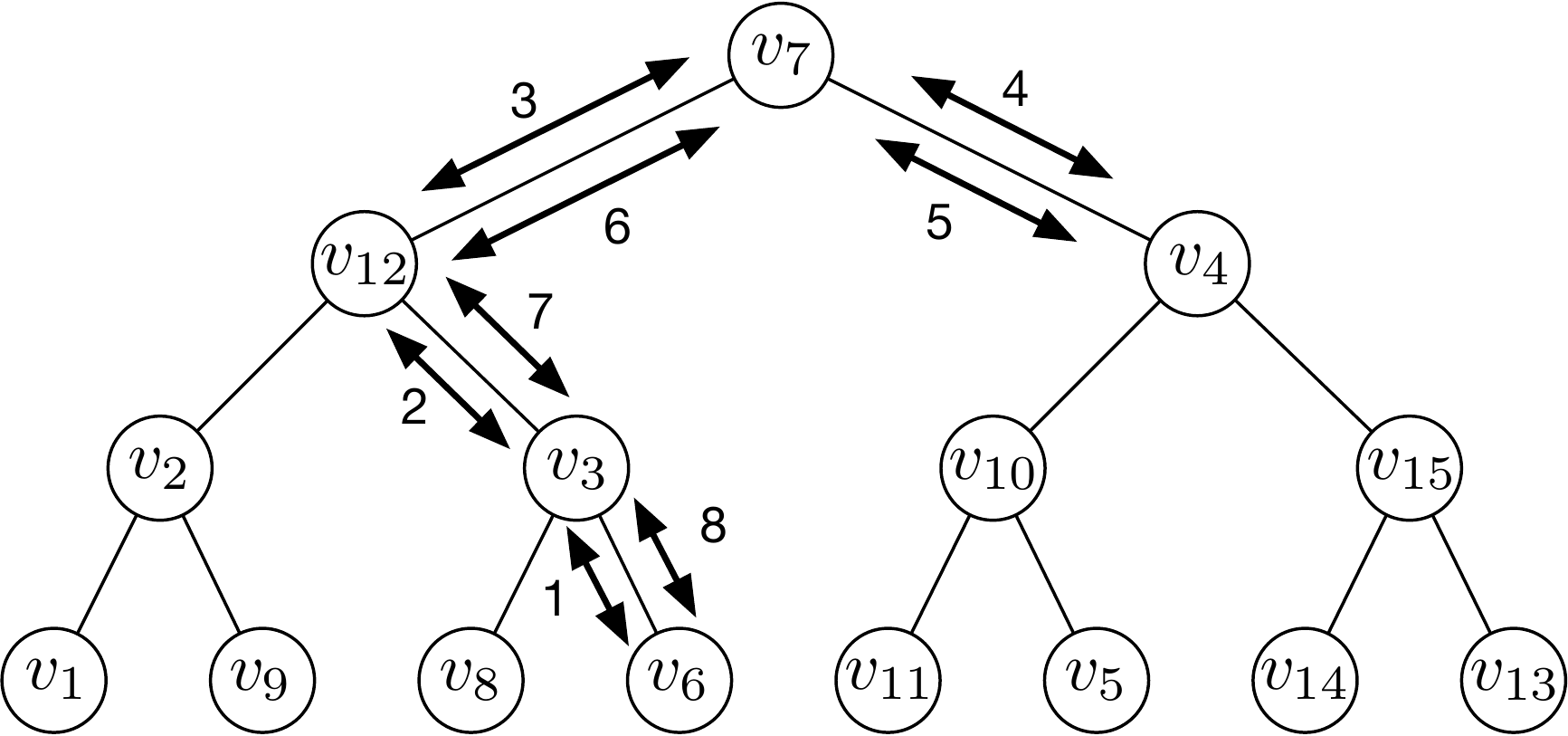} &
   \includegraphics[width=.32\textwidth]{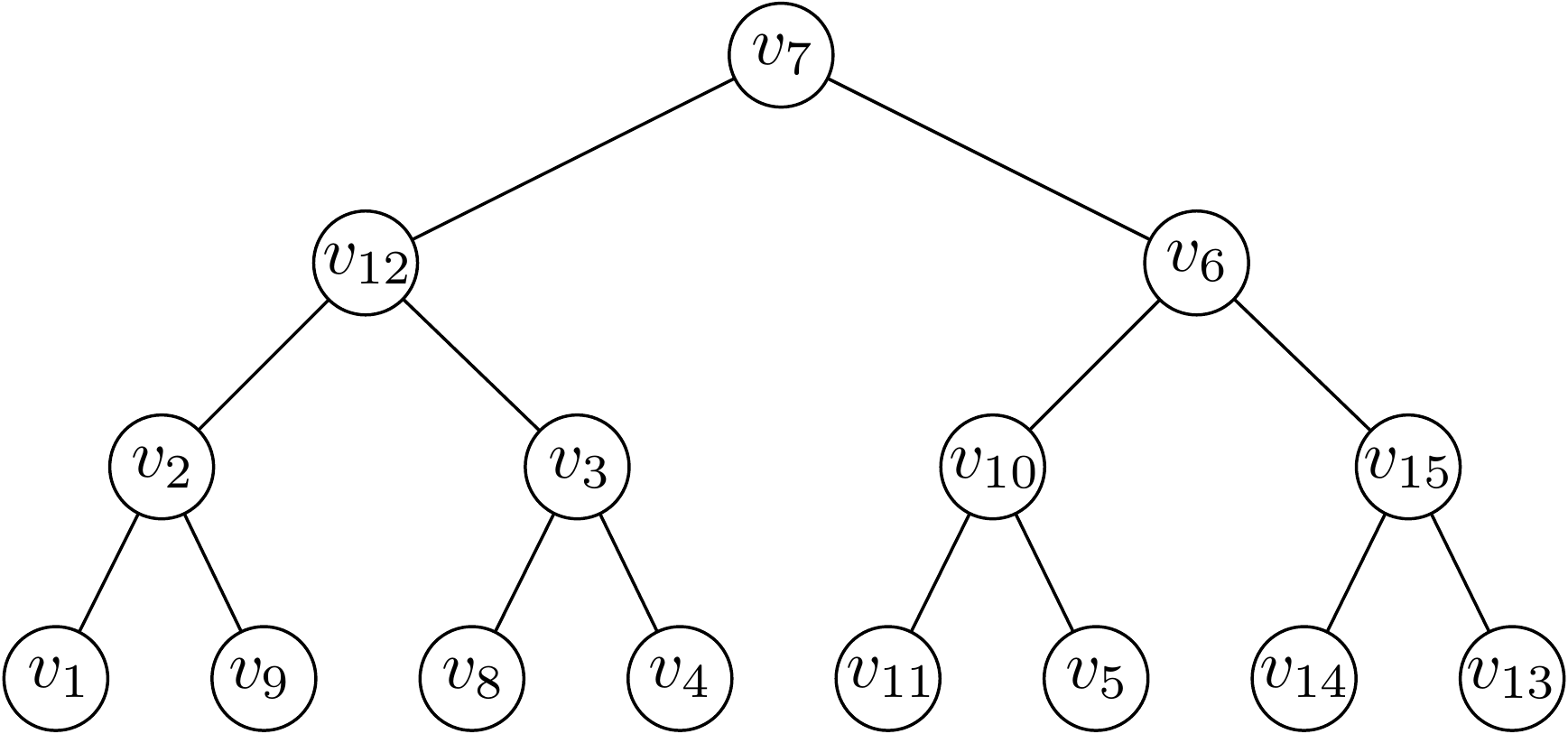} \\
    (a) & (b) & (c)
\end{tabular}
    \caption{(a) Our \emph{complete tree} model: a source with a map, a tree of servers that host items (nodes) and a \emph{swap} operation between neigboring items.
    (b)~The node's tree network implied by the tree $T$ from (a) and the set of swaps needed to interchange the location of $v_6$ and $v_4$. (c) The tree network after the interchange and swap operations of (b).}
    \label{fig:model}
 \end{centering}
\end{figure*}

\noindent{\bf Paper Organization:}We discuss problem model and other preliminaries in Section \ref{sec:prelim} followed by lower bound in Section \ref{sec:accessopt}. The deterministic algorithm $\ONDET$ with analysis is provided in Section \ref{sec:deterministic}. The randomized algorithm \textsc{\ONRAND} is discussed in Section \ref{sec:maintain}. Related work is in Section \ref{sec:rel}. We conclude in Section \ref{sec:conclusion} followed by an appendix.

\section{Model and Preliminaries}\label{sec:prelim}

Our problem can be formalized using the following simple model.
We consider a single \emph{source} that needs to communicate
with a set of $n$ nodes $V=\{v_1,\ldots,v_n\}$. The nodes are arranged
in a complete binary tree  and the source is connected the the root  of the tree.
While the tree describes a reconfigurable
\emph{network}, we will use terminology from
datastructures, to highlight this relationship
and avoid the need to introduce new terms.

We consider a complete tree $T$ connecting $n$  servers $S=\{s_1,\ldots,s_n\}$.
We will denote by $s_1(T)$ the root of the tree $T$, or
$s_1$ when $T$ is clear from the context, and
by $s_i.\Left$ (resp.~$s_i.\Right$)
the left (resp.~right) child of server $s_i$.
We assume that the $n$ servers
store $n$ items  (nodes) $V=\{v_1,\ldots,v_n\}$,
one item per server.
For any $i\in[1,n]$ and any time $t$, we will denote by $s_i.\guest^{(t)} \in V$
the item mapped to $s_i$ at time $t$. Similarly, $v_i.\host^{(t)} \in S$ denotes the server hosting
item $v_i$. Note that if $v_i.\host^{(t)}=s_j$ then $s_j.\guest^{(t)}=v_i$.

The \emph{depth} of
a server $s_i$ is fixed and describes the distance from the root;
it is denoted
by $s_i.\depth$, and $s_1.\depth = 0$. The depth of
an item $v_i$ \emph{at time} $t$ is denoted by $v_i.\depth^{(t)}$,
and is given by the depth of the server to which $v_i$ is mapped at time $t$.
Note that $v_i.\depth^{(t)}=v_i.\host.\depth^{(t)}$.

To this end, we interpret communication
requests from the source as \emph{accesses} to \emph{items} stored
in the (unordered) tree.
All access requests (resp.~communication requests) to items
(resp.~nodes)
originate from the root $s_1$. If an item (resp.~node)
is frequently requested, it can make sense to
move this item (node) closer to the root of $T$: this
is achieved by \emph{swapping} items which are neighboring
in the tree (resp.~ by performing local topological swaps).

Access requests occur over time,
forming a (finite or infinite) sequence
$\sigma=(\sigma^{(1)},\sigma^{(2)},\ldots)$,
where $\sigma^{(t)} = v_i \in V$
denotes that item $v_i$ is requested, and needs
to be accessed at time $t$.
The  sequence $\sigma$ (henceforth also called the \emph{workload}) is revealed one-by-one to an
online algorithm $\ON$.
The \emph{working set} of an item $v_i$ at time $t$ is the set of distinct elements accessed since the last access
of $v_i$ prior to time $t$, including $v_i$.
We define the \emph{rank} of item $v_i$ at time $t$ to be the size of the working set of $v_i$ at time $t$ and denote it as $v_i.\rank^{(t)}$.
When $t$ is clear of context, we simply write $v_i.\rank$.
The working set bound of sequence $\sigma$ of $m$ requests is defined as
$WS(\sigma) = \sum_{t=1}^{m} \log(\sigma^{(t)}.\rank)$.

Both serving (i.e., \emph{routing}) the request and adjusting the configuration
comes at a cost. We will discuss the two cost components in turn.
Upon a request, i.e., whenever the source wants to communicate to a partner, it routes
to it via the tree $T$.
To this end, a message passed between nodes
can include, for each node it passes, a bit indicating which child to forward the message next
(requires $O(\log n$) bits). Such a \emph{source routing} header can be built based
on a dynamic global \emph{map} of the tree that is maintained at the source node.
As mentioned, the source node is a direct neighbor of the root of the tree, aware of all requests, and therefore it can maintain the map.
The \emph{access cost} is hence given by the distance between the root
and the requested item, which is basically the depth of the item in the tree.

The \emph{reconfiguration cost} is due to the
adjustments that an algorithm performs on the tree. We define the unit cost of reconfiguration as a \emph{swap}: a
swap means changing position of an item with its parent.
Note that, any two items $u, v$ in the tree can be
\emph{interchanged} using a number of swaps equal
to twice the distance between them. This can be achieved by $u$ first swapping along the path to $v$ and then
$v$ swapping along the same path to initial location of $u$. This interchange operation results in the tree staying the same, but only $u$ and $v$ changing locations.
We assume that to interchange items, we first need to access one of them.
See Figure \ref{fig:model} for an example of our model and interchange operation.

\begin{definition}[Cost]\label{def:cost}
The cost incurred by an algorithm $\ALG$ to serve a request $\sigma^{(t)}= v_i$ is denoted by $\cost(\ALG(\sigma^{(t)}))$, short $\cost^{(t)}$.
It consists of two parts,
\emph{access} cost, denoted $\acc\cost^{(t)}$, and \emph{adjustment} cost, denoted $\adj\cost^{(t)}$.
We define access cost simply as $\acc\cost^{(t)} = v_i.\depth^{(t)}$ since $\ALG$ can maintain a global \emph{map} and access $v_i$ via the shortest path.
Adjustment cost, $\adj\cost^{(t)}$, is the
total number of swaps, where a single swap
means changing position of an item with its parent or a child.
The total cost,
incurred by $\ALG$ is then
\begin{align*}
\cost(\ALG(\sigma)) =\sum_t \cost(\ALG(\sigma^{(t)})) = \sum_t \cost^{(t)} =\sum_t  (\acc\cost^{(t)}+\adj\cost^{(t)})
\end{align*}
\end{definition}
Our main objective is to design online algorithms that perform almost
as well as optimal offline algorithms (which know $\sigma$
ahead of time), even in the worst-case.
In other words, we want to devise online algorithms
which minimize the competitive ratio:

\begin{definition}[Competitive Ratio $\rho$]\label{def:comp-ratio}
We consider the standard definition of (strict) competitive ratio $\rho$, i.e.,
$\rho = \max_{\sigma} \cost(\ON)/\cost(\OPT)$
where $\sigma$ is any input sequence and
where $\OPT$ denotes the optimal offline algorithm.
\end{definition}
If an online algorithm is constant competitive, independently
of the problem input, it is called
\emph{dynamically optimal}.

\begin{definition}[Dynamic Optimality]\label{def:dyn-opt}
An (online) algorithm $\ON$ achieves \emph{dynamic optimality}
if it asymptotically matches
the offline optimum on every access sequence. In other words,
the algorithm $\ON$ is $O(1)$-competitive.
\end{definition}

We also consider a weaker form of competitivity
(similarly to the notion of \emph{search-optimality} in
related work~\cite{search-optimality}), and say that $\ON$ is
\emph{access-competitive} if we consider only the access cost of  $\ON$
(and ignore any adjustment cost) when comparing it to $\OPT$ (which needs to pay
both for access and adjustment).
For a randomized algorithm, we consider an oblivious online adversary
which does not know the random bits of the online algorithm a priori.

The
\noindent \textbf{\emph{Self-adjusting Complete Tree Problem}}
considered in this paper can then be formulated as follows:
Find an online algorithm which serves any
(finite or infinite) online request sequence
$\sigma$
with minimum cost (including both access and rearrangement costs), on a
self-adjusting complete binary tree.

\section{Access Optimality: A Working Set Lower Bound}\label{sec:accessopt}

For \emph{fixed} trees, it is easy to see that
keeping frequent items close to the root,
i.e., using a \emph{Most-Frequently Used} (MFU) policy,
is optimal (cf.~Appendix).
The design of online algorithms
for \emph{adjusting} trees is more involved.
In particular, it is known that MFU is not optimal for lists~\cite{list-update-upperbound}.
A natural strategy could be to try and keep items close to the root which have been frequent ``recently''. However, this raises the question over which time interval to compute the frequencies. Moreover, changing from one MFU tree to another one may entail high adjustment costs.

This section introduces a natural \emph{pendant} to the MFU tree for a dynamic setting: the \emph{Most Recently Used (MRU) tree}. Intuitively, the MRU tree tries to keep the ``working set'' resp.~\emph{recently} accessed items close to the root. In this section we show a working set lower bound for any self-adjusting complete binary tree.

While the move-to-front algorithm, known to be dynamically optimal for self-adjusting lists \cite{list-update-upperbound}, naturally provides such a ``most recently used'' property, generalizing move-to-front to the tree is non-trivial. We therefore first show that any algorithm that maintains an MRU tree is \emph{access-competitive}.
With this in mind, let us first formally define the MRU tree.

\begin{definition}[MRU Tree]\label{def:mru}
For a given time $t$, a tree $T$ is an \emph{MRU} tree
if and only if,
\begin{eqnarray}\label{eq:rankdepth}
v_i.\depth = \lfloor\log v_i.\rank \rfloor
\end{eqnarray}
\end{definition}

Accordingly the root of the tree (level zero) will always host an item of $\rank$ one. More generally, servers in level $i$ will host items that have a $\rank$ between $(2^{i},2^{i+1}-1)$. Upon a request of an item, say $v_j$ with $\rank$ $r$, the $\rank$ of $v_j$ is updated to one, and only the ranks of items with $\rank$ smaller than $r$ are increased, each by 1. Therefore, the $\rank$ of items with rank higher than $r$ do not change and their level (i.e., depth) in the MRU tree remains the same (but they may switch location within the same level).

\begin{definition}[MRU algorithm]
An online algorithm $\ON$ has the \emph{MRU} property (or the working set property) if for each time $t$, the tree $T^{(t)}$ that
$\ON$  maintains, is an \emph{MRU} tree.
\end{definition}

The working set lower bound will follow from the following theorem (Theorem \ref{thm:mrucomp}) which
states that any algorithm that has the \emph{MRU} property is \emph{access competitive}.
Recall that an analogous statement of Theorem \ref{thm:mrucomp}
is known to be true for a \emph{list}~\cite{list-update-upperbound}.
As such, one would hope to find a simple proof that holds
for complete trees, but it turns out that this is
not trivial, since $\OPT$ has more freedom in trees.
We therefore present a direct proof based on a potential function,
similar in spirit to the list case.

\begin{theorem}\label{thm:mrucomp}
Any online algorithm $\ON$ that has the \emph{MRU} property is 4 access-competitive.
\end{theorem}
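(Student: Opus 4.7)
The plan is to follow the classical Sleator--Tarjan amortized analysis for Move-to-Front on lists, adapted to complete binary trees. The proof introduces a non-negative potential $\Phi^{(t)}$ (with $\Phi^{(0)}=0$) measuring the disagreement between $\ON$'s MRU tree $T_{\ON}^{(t)}$ and $\OPT$'s tree $T_{\OPT}^{(t)}$, and aims at the per-step amortized inequality
\[
\acc\cost^{(t)}(\ON) + \Phi^{(t)} - \Phi^{(t-1)} \;\le\; 4 \cdot \cost^{(t)}(\OPT).
\]
Summing over $t$ and using $\Phi^{(t)}\ge 0$ then yields $\acc\cost(\ON(\sigma)) \le 4\,\cost(\OPT(\sigma))$, which is exactly the claim.

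Two design constraints shape the form of $\Phi$. First, since every $\OPT$ swap changes the depth of only two items (by $1$ each), $\Phi$ should be additive in per-item terms so that a single swap perturbs it by $O(1)$, which is what absorbs $\adj\cost(\OPT)$ into the right-hand side. Second, at an access to $v=\sigma^{(t)}$ of rank $r$, the MRU update promotes $v$ to the root (depth $0$) and cascades one push-down on each of the levels $0,\dots,d-1$, where $d=\lfloor\log r\rfloor$ is $v$'s MRU-depth; hence on the $\ON$-side only $d$ items besides $v$ itself actually change level, by exactly one step downward, while the (possibly many) same-level reshuffles inside each level leave depths, and therefore any depth-based per-item potential, unchanged. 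A natural candidate realizing both constraints is
\[
\Phi^{(t)} \;=\; \sum_{v\in V}\,\max\!\bigl(0,\; \lfloor \log v.\rank^{(t)}\rfloor \,-\, v.\depth^{(t)}_{\OPT}\bigr),
\]
the total \emph{MRU-depth excess} of the items over $\OPT$'s placement, which is positive precisely when $\OPT$ has stored credit by holding $v$ shallower than MRU prescribes.

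The per-step analysis then accounts for three contributions to $\Delta \Phi$: (i)~$v$'s summand drops from $\max(0,d - v.\depth^{(t)}_{\OPT})$ to $0$; (ii)~each of the $d$ pushed-down items raises its summand by at most~$1$, since its MRU-depth rises by exactly $1$; and (iii)~each $\OPT$ swap this step perturbs $\Phi$ by at most~$2$. Splitting into the easy case $d \le v.\depth^{(t)}_{\OPT}$ (where the $\ON$-side is trivially at most $2\,v.\depth^{(t)}_{\OPT}$) and the hard case $d > v.\depth^{(t)}_{\OPT}$ (where the $v$-summand's drop consumes credit that $\OPT$ deposited in earlier steps, while it paid swap cost to pull $v$ upward) produces the $4\cdot\cost^{(t)}(\OPT)$ bound, with one factor of $2$ from the depth comparison and one from absorbing $\adj\cost(\OPT)$. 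The main obstacle is precisely this hard case: it cannot be closed step-locally because $\cost^{(t)}(\OPT) = v.\depth^{(t)}_{\OPT}$ can be arbitrarily smaller than $d$, so the argument must telescope $\OPT$'s historical swap cost into the current potential drop via $\Phi$ globally rather than step-by-step. Verifying both the sparsity of the MRU cascade (only $d$ level-crossings per access, each contributing at most $+1$ to $\Phi$, and same-level swaps contributing $0$) and the tight consumption of $\OPT$'s stored credit is the technical heart of the proof, and is where the specific constant $4$, rather than a larger absolute constant, comes from.
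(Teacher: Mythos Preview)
Your potential $\Phi=\sum_v\max(0,\lfloor\log v.\rank\rfloor-v.\depth_{\OPT})$ does not close the per-step amortized inequality, and the sentence ``the argument must telescope $\OPT$'s historical swap cost into the current potential drop via $\Phi$ globally rather than step-by-step'' is precisely where the proof breaks: potential-function arguments \emph{are} the per-step inequality summed, so if $d_t+\Delta\Phi^{(t)}\le 4\cost^{(t)}(\OPT)$ fails at some step, the method gives you nothing. And it does fail. In the hard case $j_t<d_t$ your own accounting gives
\[
d_t+\Delta\Phi^{(t)} \;\le\; d_t \;-\;(d_t-j_t)\;+\;R_t\;+\;S_t \;=\; j_t+R_t+S_t,
\]
where $R_t$ is the number of the $d_t$ ``pushed-down'' items whose summand actually increments. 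You assert $R_t\le d_t$ but never bound it by $O(j_t+m_t)$, and indeed you cannot: take $\OPT$ to keep the accessed item $v$ at the root and do no swaps this step ($j_t=0$, $m_t=0$), while $v$ has large rank so $d_t$ is large. If $\OPT$'s tree happens to place each item of old rank $2^k-1$ at depth $k-1$ (perfectly consistent with a complete tree), then every one of the $d_t-1$ such items with $k\ge 2$ satisfies $\depth_{\OPT}<k$ and contributes $+1$, so $R_t\ge d_t-1$. The amortized cost is then at least $j_t+R_t\ge d_t-1$, while $4\cost^{(t)}(\OPT)=0$. No global telescoping rescues this: unwinding $\sum_t(d_t+\Delta\Phi^{(t)})=\sum_t d_t+\Phi^{(T)}$ shows the potential contributes nothing in aggregate unless the per-step bound holds.

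The paper's proof uses a genuinely different potential. It counts, for each server $s_i$ in $\OPT$'s tree, the number $\alpha_i$ of \emph{bad pairs} $(s_i,s_j)$ with $s_i.\depth<s_j.\depth$ but $s_i.\guest.\rank>s_j.\guest.\rank$, sets $B_i=1+\alpha_i/2^{s_i.\depth}$, and takes $\Phi=\log\prod_i B_i$. The normalization by $2^{s_i.\depth}$ is the crucial device you are missing: after the rank update, only the $2^{j_t}-1$ servers at depth ${<}\,j_t$ in $\OPT$'s tree can gain a bad pair with the accessed server, and each $B_i$ grows by at most a factor of $2$, so the product grows by at most $2^{j_t}$ and the log-potential rises by at most $j_t$ (not $d_t$). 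Simultaneously the accessed server's factor $B_r$ drops from at least $2^{d_t-j_t}$ to $1$, giving $\Delta\Phi_1\le 2j_t-d_t$ and hence amortized cost $\le 2j_t$. A separate computation bounds the potential change per $\OPT$ swap by a constant, yielding the factor $4$. In short, the paper engineers the potential so that the ``rise'' term is controlled by $\OPT$'s access depth rather than by the MRU depth; your additive, unnormalized potential cannot achieve this.
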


\begin{proof}
Consider the two algorithms $\ON$ and $\OPT$.
We employ a potential function argument which is based on the difference in the items' locations between $\ON$'s tree and $\OPT$'s tree.
For any server $s_i$, we define a pair $(s_i,s_j)$ as \emph{bad} on a tree of some algorithm $A$
if $s_i.\depth < s_j.\depth$ but $s_i.\guest.\rank (A) > s_j.\guest.\rank (A)$,
i.e., $s_i$ is at a lower level although $s_j.\guest (A)$ has been accessed more recently. We observe that any bad pair $(s_i,s_j)$ for $s_i$ is an ordered pair, i.e., this pair is not bad for $s_j$. Also note that, for any server $s_i$, $s_i.\depth$ is same on any tree for any algorithm, what may differ is $s_i.\guest$ resp. $s_i.\guest.\rank$.
Since $\ON$ has the \emph{MRU} property it follows by definition that none of its pairs are bad.
Hence bad pairs appear only on $\OPT$'s tree.
Let, for any algorithm $A$, $\alpha_i(A)$ denote the number of \emph{bad} pairs for $s_i$ in $A$'s tree.
Let $B_i(A)$ be equal to one plus $\alpha_i(A)$ divided by the number of elements at level $s_i.\depth$.
More formally,
$B_i(A)=1+\frac{\alpha_i(A)}{2^{s_i.\depth}}$.
Define $B(A) = \prod_{i=1}^{n} B_i(A)$. Now we define the potential function $\Phi = \log B(\OPT)-\log B(\ON)$ which is based on the difference in the number of bad pairs between $\ON$'s tree and $\OPT$'s tree. According to our definition, $B(\ON)=1$ and hence $\Phi = \log B(\OPT)$. Therefore, from now onwards, we use $B$ resp.~$\log B$ instead of $B(\OPT)$ resp.~$\log B(\OPT)$. We consider the occurrence of events in the following order. Upon a request, $\ON$ adjusts its tree, then $\OPT$ performs the rearrangements it requires.

Let the potential at time $t$ be $\Phi$ (i.e., before $\ON$'s adjustment, after serving request $\sigma^{(t)}$, and before $\OPT$'s rearrangements between requests $t$ and $t+1$) and the potential after $\ON$ adjusted to its tree be $\Phi'$. Then the potential change due to $\ON$'s adjustment is
\begin{align*}
\Delta \Phi_1 = \Phi' - \Phi = \log B' - \log B = \log \frac{B'}{B}
\end{align*}
We assume that the initial potential is $0$ (i.e., no item was accessed). Since the potential is always positive by definition, we can use it to bound the \emph{amortized} cost of $\ON$, $\acost(\ON)$. Consider a request at time $t$ to an item at depth $k$ in the tree of $\ON$. The access-cost is $\cost^{(t)}(\ON)=k$ and we would like to have the following bound: $\acost^{(t)}(\ON) \le \cost^{(t)}(\ON) + \Delta \Phi$. Assume that the requested item $\sigma^{(t)}$ is at
server $s_{r}$ at depth $j$ in $\OPT$'s tree, so $\OPT$ must pay at least an access cost of $j$. Let $k$ be the depth of $\sigma^{(t)}$ in $\ON$ tree. First we assume that $j < k$.

Let us compute the potential after $\ON$ updated its MRU tree. For any server $s_i$ at depth lower than $j$ i.e., for which $s_i.\depth < j$,
it holds that $B'_i = B_i + 1/2^{s_i.\depth} \leq B_i + 1$: since the rank of the guest of the last accessed server, $s_r$, changed (to $1$) and hence $\alpha_i$ increases by 1 for all of them. That is, for all servers for which $s_i.\depth \ge j$ (excluding $s_{r}$), $B'_i = B_i$. The potential of the accessed server, $s_{r}$, will be $B'_{r}=1$, since its guest's rank becomes $1$.
Although due to the access, the rank of some other elements increase by 1, that does not affect the number of bad pairs.
Let the rank of the requested element $\sigma^{(t)}$ before it was accessed be $\sigma^{(t)}.\rank$. After the access at time $t$,
the rank of all the elements with rank lesser than $\sigma^{(t)}.\rank$ will increase by 1. Consider any pair $(s_i,s_j)$ before the access of $\sigma^{(t)}$. We have already seen what happens if either $s_i$ or $s_j$ is $s_r$.
Otherwise a pair $(s_i,s_j)$ cannot change from
bad to good (resp.~good to bad) since if only $s_j.\rank$ (resp.~$s_i.\rank$) increases by 1, it cannot be more than that of $s_i$ (resp.~$s_j$). Now:
\begin{align*}
B' = \prod_{i=1}^{n} B'_i &\le  \left( \prod_{s_i.\depth < j} (B_i+1) \right) \left (\prod_{\substack{s_i.\depth \ge j \\ i \neq r}} B_i \right ) B'_{r}\notag \\
&\le \left(2^{j} \prod_{s_i.\depth < j}B_i \right) B_r \left (\prod_{\substack{s_i.\depth \ge j \\ i \neq r}} B_i \right ) \frac{B'_{r}}{B_r} \notag\\&
= \frac{2^{j}}{B_r}\prod_{i=1}^{n} B_i =  \frac{2^{j}}{B_r} B
\end{align*}

The second line results from $\prod_{i=1}^{n}(B_i+1) \le 2^n\prod_{i=1}^{n}B_i$ when $B_i \ge 1$, and by multiplying and dividing by $B_{r}$. Also recall that $B'_{r}=1$.
Note that $s_r.\depth = j$ and $j < k$ so,
\begin{align*}
B_r \ge \frac{(2^{k+1}-1)- (2^{j+1}-1)}{2^j} \ge \frac{2^{k+1}}{2^j}-2\notag \ge\frac{2^{k+1}}{2^{j+1}}= 2^{k-j}
\end{align*}
Now consider the change in potential $\Delta \Phi_1$.
\begin{align*}
\Delta \Phi_1 = \log \frac{B'}{B} =
\log \frac{2^{j} B }{B_r B} =
\log \frac{2^{j}}{B_r} \notag
\le \log \frac{2^{j}}{2^{k-j}} \le
\log 2^{2j-k} = 2j-k
\end{align*}

Now we consider $j\geq k$. In this case also, for any server for which $s_i.\depth=j' < j$, it holds that $B'_i \leq B_i + 1$ and for all servers for which $s_i.\depth \ge j$ (excluding $s_{r}$), $B'_i = B_i$. Again $B'_{r}=1$ but $B_{r}\geq 1$ since $j\geq k$.
By similar calculations, we get $B'\frac{2^{j}}{B_r} B$ and then, $\Delta \Phi_1\le \log 2^j = j \le 2j-k$.
To complete the proof we need to compute the potential change due to $\OPT$'s rearrangements between accesses.
Consider the potential after $\OPT$ adjusted its tree, $\Phi''$. Then the potential change due to $\OPT$'s adjustment is
\begin{align*}
\Delta \Phi_2 = \Phi'' - \Phi'  = \log B'' - \log B' = \log \frac{B''}{B'}
\end{align*}

The only operation $\OPT$ performs is swap i.e., changing positions between parent and a child.
$\OPT$ may need to change positions of items during rearrangement between
accesses. These can always be done using multiple number of swaps, upwards or downwards or both.
Below we compute potential difference due to such a swap.
Let $\OPT$ access an item $z$ at $s_c$ from depth $k'$, raising it to depth $k'-1$ by swapping it with its parent $z_p$ at $s_{p}$.

For all servers with $s_i.\depth = k'-1$, except $s_{p}$, $B''_i \leq B'_i + 1/2^{k-1}$ holds, as $z_p$
goes to level $k'$ from $k'-1$ and may become bad to all the servers at level $k'-1$.
For $s_{p}$, $B''_{p} \leq 2B'_{c} + \frac{2^{k'}}{2^{k'-1}}=2B'_{c} +2\leq 4B'_{c}$, as all the items in layer $k'$ may become bad w.r.t.~$z$.
Also $B''_{c}\leq B'_{p}$
Notice that changes only occur at depth $k'-1$, nothing will change above or below that.
We use the following inequality while computing $B''$.

$$
\prod_{i=1}^n (x_i + 1/n) \le e \prod_{i=1}^n x_i~~~~~~\text{$x_i\geq 1 \forall i$}
$$
We prove it below.
\begin{align*}
\prod_{i=1}^n (x_i + 1/n)\le \prod_{i=1}^n (x_i + x_i/n)=\prod_{i=1}^n (x_i) (1+ 1/n)^n \le e \prod_{i=1}^n x_i
\end{align*}
The computation of $B''$ is shown in Table \ref{table:potential}.
\begin{table*}
\begin{align}
B'' = \prod_{i=1}^{n} B''_i &\le  \left( \prod_{s_i.\depth < k'-1} (B''_i) \right) \left (\prod_{\substack{s_i.\depth \ge k'\\ i \neq q}} B''_i \right ) (B''_c) (B''_{p})\left( \prod_{\substack{s_i.\depth = k'-1 \\ i \neq p}} (B''_i) \right)\notag \\
&\le  \left( \prod_{s_i.\depth < k'-1} (B'_i) \right) \left (\prod_{\substack{s_i.\depth \ge k'\\i\neq q}} B'_i \right ) (B'_{p})(4B'_{c}) \left( \prod_{\substack{s_i.\depth = k'-1\\ i \neq p}} (B'_i+ \frac{1}{2^{k'-1}}) \right)\notag \\
&= \left( \prod_{s_i.\depth < k'-1} (B'_i) \right) \left (\prod_{\substack{s_i.\depth \ge k'\\i\neq q}} B'_i \right ) (B'_{p})(4B'_{c})
\left( e\prod_{\substack{s_i.\depth = k'-1\\ i \neq p}} (B'_i) \right)\notag \\
&=4e\left( \prod_{s_i.\depth < k'-1} (B'_i) \right) \left (\prod_{\substack{s_i.\depth \ge k'}} B'_i \right ) (B'_{p})(B'_{c})
\left( \prod_{\substack{s_i.\depth = k'-1\\ i \neq p}} (B'_i) \right)\notag \\
&=4eB'\notag
\end{align}
\caption{\label{table:potential} Computation of $B''$}
\end{table*}

Now we compute the potential change due to $\OPT$'s single swap:
\begin{align*}
\Delta \Phi_2 = \log \frac{B''}{B'} &\leq\log {4e} < 4
\end{align*}

The potential change is less than 4 per swap where $\OPT$ must pay one for that swap. If the number of swaps is $m$ for the rearrangement of $\OPT$ between any two accesses, the potential change is bounded by $4m$.
Putting it all together, we get
\begin{align*}
\label{eq:ammortized}
\acost^{(t)}(\ON)& \le \cost^{(t)}(\ON) + \Delta \Phi_1 + \Delta \Phi_2 \notag
\leq k + (2j - k)+ 4m\\
& \leq 4(j+m)\notag  = 4\cost^{(t)}(\OPT)
\end{align*}

Finally,
\begin{align*}
 \cost(\ON) &= \sum^{t} \cost^{(t)}(\ON)\notag
 = \sum^{t} \acost^{(t)}(\ON) - (\Phi^{(t)} - \Phi^{(0)})\notag \le \sum^{t} \\
 & \acost^{(t)}(\ON) \le \sum^{t} 4\cost^{(t)}(\OPT) = 4\cost(\OPT)
\end{align*}
\end{proof}

Based on Theorem \ref{thm:mrucomp} we can now show our working set lower bound:
\wslowerbound*

\begin{proof}
The sum of the access costs of items from
an MRU tree is exactly $WS(\sigma)$.
For the sake of contradiction
assume that there is an algorithm $\ALG$
with cost $\cost(\ALG(\sigma)) < WS(\sigma)/4$.
If follows that Theorem \ref{thm:mrucomp} is not true. A contradiction.
\end{proof}

\section{Deterministic Algorithm}\label{sec:deterministic}

\subsection{Efficiently Maintaining an MRU Tree}

It follows from the previous section that if we can maintain an MRU tree at the cost of \emph{accessing} an MRU tree, we will have a dynamically optimal algorithm.
So we now turn our attention to the problem of efficiently maintaining an MRU tree.
To achieve optimality, we need that the tree adjustment cost will be
proportional to the access cost.
In particular, we aim to design a tree which on one hand achieves a good
approximation of the MRU property to capture temporal locality, by providing
fast \emph{access} (resp.~\emph{routing}) to items;
and on the other hand is also adjustable at low cost over time.

Let us now assume that a certain item $\sigma^{(t)}=u$ is accessed
at some time $t$.
In order to re-establish the (strict) MRU property, $u$ needs to be promoted
to the root.
This however raises the question of where to move the item currently
located at the root, let us call it~$v$.
A natural idea to make space for $u$ at the root while
preserving locality, is to \emph{push down}
items from the root, including item $v$.
However, note that simply pushing items down along the path between
$u$ and $v$ (as done in lists) will result in a poor performance in the tree.
To see this,
let us denote the sequence of items along the path from $u$ to $v$
by $P=(u,w_1,w_2,\ldots, w_{\ell},v)$, where $\ell=u.\depth$,
\emph{before} the adjustment.  Now assume that the access sequence
$\sigma$ is such that it repeatedly cycles through the sequence
$P$, in this order. The resulting cost per request is in the order
of $\Theta(\ell)$, i.e., could reach $\Theta(\log{n})$ for $\ell=\Theta(\log{n})$.
However, an algorithm which assigns (and then fixes) the items in $P$
to the top $\log{\ell}$ levels of the tree, 
will converge to a cost of only $\Theta(\log{\ell})\in O(\log\log{n})$
per request: an exponential improvement.

Another basic idea is to try and keep the MRU property at every step. Let
us call this strategy \textsc{Max-Push}.
Consider a request to item $u$ which is at depth $u.\depth=k$.
Initially $u$ is moved to the root. Then
the \textsc{Max-Push} strategy chooses for each depth
$i< u.\depth$, the \emph{least} recently accessed (and with maximum rank) item from level~$i$: formally,
$w_i = \arg\max_{v \in V: v.\depth=i}v.\rank$.
We then push $w_i$ to the host of $w_{i+1}$. It is not hard to see that
this strategy will actually maintain a perfect MRU tree.
However, items with the maximum $\rank$
in different levels, i.e., $w_i.\host$ and $w_{i+1}.\host$,
may not be in a parent-child relation. So to push $w_i$ to $w_{i+1}.\host$, we may
 need to travel all the way from $w_{i}.\host$ to the root and then from the root to $w_{i+1}.\host$, resulting in a cost proportional
to $i$ per level $i$.
This accumulates a rearrangement cost of $\sum_{i=1}^k i > k^2/2$ to push all the items with maximum $\rank$ at each layer, up to layer $k$.
This is not proportional to the original access cost $k$ of the requested item and therefore,
leads to a non-constant competitive ratio as high as $\Omega(\log n)$.

Later, in Section \ref{sec:maintain}, we will
present a randomized algorithm that maintains a
tree that approximates an MRU tree at a low cost.
But first, we will present
a simple deterministic algorithm that does not directly maintain
an MRU tree, but has cost that is proportional to the MRU cost and is hence
dynamically optimal.

\subsection{The $\ONDET$ Algorithm}
In this section we propose a simple deterministic algorithm, $\ONDET$, that is proven to be dynamically optimal. Interestingly $\ONDET$ does not maintain the MRU property but its cost is shown to be competitive to the \emph{access cost} on an MRU tree, and therefore, to the
working set lower bound.

$\ONDET$ is described in Algorithm \ref{alg:movehalfup}. Initially,
$\ONDET$ and $\OPT$ start from the same tree (which is assumed w.l.o.g.~to be an MRU tree).
Then, upon a request to an item $u$, $\ONDET$ first accesses $u$ and then interchanges its position with node $v$ that is the highest ranked item positioned at half of the depth of $u$ in the tree. After the interchange the tree remains the same, only $u$ and $v$ changed locations.
See Figure \ref{fig:model} (b) for an example of $\ONDET$ operation where $v_6$ at depth 3 is requested and is then interchanged with $v_4$ at depth 1 (assuming it is the highest rank node in level 1).

\begin{algorithm}[t]
\caption{Upon request to $u$ in \textsc{$\ONDET$'s Tree}}
\label{alg:movehalfup}
\begin{algorithmic}[1]
\vspace{2mm}
\STATE~\textbf{access} $u=s.\guest$ along the tree branches \hfill (cost: $u.\depth$)
\STATE~let $v$ be the item with the highest rank at depth $\lfloor u.\depth/2 \rfloor$
\STATE~\textbf{swap} $u$ along tree branches to node $v$ \hfill (cost: $\frac{3}{2}u.\depth$)
\STATE~\textbf{swap} $v$ along tree branches to server $s$ \hfill (cost: $\frac{3}{2}u.\depth$)
\end{algorithmic}\label{alg:movehalfup}
\end{algorithm}

The \emph{access cost} of $\ONDET$ is proportional to the access cost of an MRU tree.
\begin{theorem}\label{thm:detvs0}
Algorithm $\ONDET$ is 4 access-competitive to an MRU algorithm.
\end{theorem}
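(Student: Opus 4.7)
The plan is an amortized analysis comparing $\ONDET$'s access cost to an MRU algorithm's access cost $\lfloor \log r \rfloor$ per request. I would define a non-negative potential $\Phi$ on $\ONDET$'s tree state (with $\Phi^{(0)} = 0$, since $\ONDET$ and the MRU comparison both start from the same MRU tree), and prove the per-step amortized inequality
$$\acc\cost^{(t)}(\ONDET) + \Delta \Phi \;\le\; 4 \lfloor \log \sigma^{(t)}.\rank \rfloor.$$
Summing over $t$ and using $\Phi \ge 0$ would then give the claimed $4$-access-competitiveness to any MRU algorithm, whose access cost at time $t$ is exactly $\lfloor \log \sigma^{(t)}.\rank \rfloor$.

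The core structural claim I aim for is: in $\ONDET$'s tree, an item $u$ residing at depth $d$ has $u.\rank \gtrsim 2^{d/2}$, i.e., $u.\depth \le 2\lfloor\log u.\rank\rfloor + O(1)$. This is plausible because $\ONDET$ only pushes an item $v$ down to depth $d$ when $v$ was the highest-rank item at depth $\lfloor d/2 \rfloor$, so $v$'s rank is inherited inductively from the natural rank scale of that level. A natural candidate potential is
$$\Phi \;=\; c \sum_{u \in V} \bigl(u.\depth - 2 \lfloor \log u.\rank \rfloor\bigr)^{+}$$
for a suitable constant $c$, which absorbs the transient slack for the just-accessed item (whose rank collapses to $1$ while its depth drops only to $d/2$). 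The per-request analysis then decomposes the potential change into three pieces: (i) the accessed item $u$ moves from $(d, r)$ to $(d/2, 1)$, which can raise its individual contribution by at most $d/2$ but also erases any prior excess; (ii) the pushed item $v$ moves from $(\lfloor d/2 \rfloor, R_v)$ to $(d, R_v')$ with $R_v' \in \{R_v, R_v + 1\}$; (iii) rank increments on items that do not change depth, which can only decrease $\Phi$ since depths stay fixed while $\lfloor \log \cdot \rfloor$ grows.

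The main obstacle is piece (ii): bounding the potential jump contributed by the pushed item $v$. I would invoke the inductive invariant together with the max-rank property of $v$ at depth $\lfloor d/2 \rfloor$ to lower-bound $R_v \ge 2^{\Omega(d)}$, so the new excess $d - 2 \lfloor \log R_v \rfloor$ is at most a constant. Combined with the $\sim d/2$ drop from $u$'s move and the access cost $d$, this balances against the $4 \lfloor \log r \rfloor$ budget on the right-hand side. Verifying that the invariant is preserved through the swap — i.e., that after $v$ is pushed it still satisfies $v.\depth \le 2\lfloor\log v.\rank\rfloor + O(1)$, possibly with the slack absorbed into $\Phi$ — together with the bookkeeping to get the precise constant $4$ is the crux of the proof.
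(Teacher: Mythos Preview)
Your structural insight is exactly the one the paper isolates (as Lemma~\ref{lem:twicedepth}/Claim~\ref{claim:twice}): when $v$ is the max-rank item at depth $\lfloor d/2\rfloor$, the $2^{\lfloor d/2\rfloor}$ distinct ranks at that level force $R_v\ge 2^{\lfloor d/2\rfloor}$, so pushing $v$ to depth $d$ leaves it at most one level deeper than twice its MRU depth. However, the paper does \emph{not} wrap this in a global potential function. Instead it runs a per-item, phase-based charging argument: for a fixed item $u$, cut the timeline at the moments $u$ is pushed down. Within one phase starting at depth $L$, $\ONDET$'s successive accesses to $u$ cost at most $L+\lfloor L/2\rfloor+\lfloor L/4\rfloor+\cdots\le 2L$ (geometric halving), while by the structural lemma the MRU algorithm pays at least $\lfloor L/2\rfloor$ on the first access of the phase, giving the ratio $4$ directly. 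Summing over phases and items finishes.

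Your potential $\Phi=c\sum_u(u.\depth-2\lfloor\log u.\rank\rfloor)^+$ is a legitimate alternative and the three-piece decomposition you sketch goes through: piece (iii) is monotone, piece (ii) contributes at most a constant by the $R_v\ge 2^{\lfloor d/2\rfloor}$ bound (no ``inductive invariant'' is actually needed here --- the max-rank property alone suffices), and piece (i) trades $d$ of access cost against a drop in $u$'s term. The trade-off is that your amortized inequality naturally comes out as $d+\Delta\Phi\le 4\lfloor\log r\rfloor+O(1)$ rather than $\le 4\lfloor\log r\rfloor$ exactly; the additive $O(1)$ per request (from the $+1$ in piece (ii) and the floor mismatches) accumulates to an additive $O(m)$ term, which is harmless for the qualitative statement but means you do not get the clean strict factor $4$ without extra work. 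The paper's phase argument sidesteps this: it compares a whole geometric block $\le 2L$ against a single MRU charge $\ge L/2$, so the constant falls out in one division with no per-step residue to track.
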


Before going to the proof of Theorem \ref{thm:detvs0}, we discuss several properties of $\ONDET$.
First, we show that whenever any element $v$ moves down in $\ONDET$'s tree,
its depth is  at most twice plus one when compared to its depth in an
MRU tree.
\begin{lemma}\label{lem:twicedepth}
Whenever some element $v$ moves down to depth $h$ in $\ONDET$'s tree, it is at least at depth $\lfloor h/2 \rfloor$ in an MRU tree.
\end{lemma}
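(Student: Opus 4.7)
The plan is a simple pigeonhole argument on ranks. In $\ONDET$, the only way an item $v$ ever moves down is as the pushed-down partner of an interchange triggered by a request to some item $u$: if $d = u.\depth$, the algorithm selects $v$ as the item of highest rank currently at depth $\lfloor d/2 \rfloor$ and interchanges the two, so that $v$ lands at depth $h = d$. By Definition \ref{def:mru}, an MRU tree places $v$ at depth $\lfloor \log v.\rank \rfloor$, so the lemma reduces to showing that at the moment of selection, $v.\rank \ge 2^{\lfloor d/2 \rfloor}$.

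Setting $r = 2^{\lfloor d/2 \rfloor}$, I would count in two directions. At every point in time the $n$ ranks form a permutation of $\{1, 2, \ldots, n\}$, so exactly $r - 1$ items carry a rank strictly less than $r$. On the other hand, the $\ONDET$ tree contains $2^{\lfloor d/2 \rfloor + 1} - 1 = 2r - 1$ servers at depth at most $\lfloor d/2 \rfloor$, so at most $r - 1$ of the items residing at those depths can have rank $< r$, leaving at least $(2r - 1) - (r - 1) = r$ items of rank $\ge r$ at depth $\le \lfloor d/2 \rfloor$. Since there are only $r - 1$ servers strictly above depth $\lfloor d/2 \rfloor$, at least one of these $r$ high-rank items must sit at depth exactly $\lfloor d/2 \rfloor$. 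Hence the maximum rank at that depth is at least $r$, and since $v$ achieves this maximum, $v.\rank \ge r = 2^{\lfloor d/2 \rfloor}$, which by Definition \ref{def:mru} gives $v$'s MRU depth at least $\lfloor d/2 \rfloor = \lfloor h/2 \rfloor$.

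The one subtlety to be careful about is the moment at which the rank snapshot is taken. The access to $u$ precedes the selection of $v$, so at selection time $u$ has already dropped to rank $1$ and the items more recently used than $u$ have each had their rank incremented by $1$; in particular $v$'s rank can only stay the same or grow across the access. Thus the pigeonhole bound, computed with post-access ranks, remains valid at the instant $v$ physically reaches depth $h$. I do not anticipate any deeper obstacle: the proof is essentially a clean count of servers versus items above a given rank threshold, with the main care going into bookkeeping the exponents $2^{\lfloor d/2 \rfloor}, 2r - 1, r - 1$ correctly.
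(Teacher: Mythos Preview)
Your proposal is correct and follows essentially the same approach as the paper: both reduce the claim to showing $v.\rank \ge 2^{\lfloor h/2\rfloor}$ and then invoke Definition~\ref{def:mru}. The paper's count is a bit more direct---it simply notes that depth $\lfloor h/2\rfloor$ hosts exactly $2^{\lfloor h/2\rfloor}$ items with distinct ranks, so the maximum rank there is at least $2^{\lfloor h/2\rfloor}$---whereas you route through the full top of the tree, but the idea is the same pigeonhole on ranks.
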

\begin{proof}
Upon a request of some element $u$, say, from depth $h$, let $v$ replace
$u$ at depth $h$ in $\ONDET$'s tree, from depth $\lfloor h/2 \rfloor$.
At the time of this request, $v$ must be the highest ranked element at depth $\lfloor h/2 \rfloor$ and accordingly, is replaced by $u$.
As the depth of the root is zero, the total number of
elements in depth $\lfloor h/2 \rfloor$ is exactly $2^{\lfloor h/2 \rfloor}$.
So $v.\rank\geq 2^{\lfloor h/2 \rfloor}$ at the time $u$
is requested. Accordingly the position of $v$ in an MRU tree
is at least at depth  $\lfloor h/2 \rfloor$ (see Equation \ref{eq:rankdepth}).
Therefore, the depth of $v$ in $\ONDET$'s tree is at most twice plus
one when compared to its depth in an MRU tree.
\end{proof}
Next, let $t = 0$ or a time where an item $v$ was moved down in $\ONDET$'s tree. Let $t' > t$ be the first time that $v$ was requested in $\sigma$ after time $t$. Then we can claim the following:
\begin{claim}\label{claim:twice}
If the depth of $v$ in $\ONDET$'s tree is $h$ at time $t'$, then its depth in an MRU tree at time $t'$ is at least $\lfloor h/2 \rfloor$.
\end{claim}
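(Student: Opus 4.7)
The plan is to reduce Claim \ref{claim:twice} to Lemma \ref{lem:twicedepth} by ``freezing'' $v$ between its most recent push-down and its next request at time $t'$, and then arguing that the MRU depth of $v$ cannot decrease in between.

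First I would let $t^{\star}$ be the most recent time in $[t,t')$ at which $v$ was moved down by $\ONDET$, taking $t^{\star}=t$ when $t>0$ and no further push-down of $v$ occurs before $t'$, and $t^{\star}=0$ in the boundary case $t=0$, where I would rely on the assumption that $\ONDET$ starts from an MRU tree. The key observation is that $v$'s location in $\ONDET$'s tree does not change during $(t^{\star},t')$: $v$ is not pulled up by an access, because by the definition of $t'$ the item $v$ is not requested in that interval; $v$ is not pushed down, by the maximality of $t^{\star}$; and, crucially, $v$ is not displaced as a by-product of some other request's interchange, because an $\ONDET$ interchange permanently relocates only its two extremal endpoints and returns every intermediate item on the swap path to its original server. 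Consequently $v$'s depth in $\ONDET$'s tree equals $h$ throughout $[t^{\star},t']$.

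Next I would invoke Lemma \ref{lem:twicedepth} at time $t^{\star}$ (or, in the base case, directly use that the initial tree is an MRU tree, so $v$'s MRU depth at time $0$ equals $h\ge\lfloor h/2\rfloor$) to conclude that $v$'s depth in the MRU tree at time $t^{\star}$ is at least $\lfloor h/2\rfloor$. Finally, I would propagate this forward to $t'$: because $v$ is not accessed in $(t^{\star},t')$, every request in that interval either touches a brand-new item (incrementing $v.\rank$ by one) or an item already in $v$'s working set (leaving $v.\rank$ unchanged), so $v.\rank$ is monotonically non-decreasing; hence $\lfloor\log v.\rank\rfloor$, which by Equation~\ref{eq:rankdepth} is exactly the depth of $v$ in an MRU tree, can only grow, yielding $v.\depth^{(t')}\ge\lfloor h/2\rfloor$ in the MRU tree as required.

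The main obstacle, and the reason this is stated as a separate claim rather than folded into Lemma \ref{lem:twicedepth}, is making the ``freezing'' step airtight. Lemma \ref{lem:twicedepth} only controls things at the instant of a push-down, so everything depends on the invariance of $v$'s $\ONDET$-depth during the otherwise quiescent interval $(t^{\star},t')$. This in turn relies on the explicit property that an $\ONDET$ interchange is a net-zero permutation on all items except the two endpoints, so the many interchanges executed on behalf of other requests between $t^{\star}$ and $t'$ never touch $v$'s server.
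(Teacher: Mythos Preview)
Your proposal is correct and follows essentially the same approach as the paper: identify the most recent push-down time of $v$ before $t'$ (the paper calls it $t''$, you call it $t^{\star}$), apply Lemma~\ref{lem:twicedepth} there, and then use that $v$'s $\ONDET$-depth is frozen while its rank, and hence its MRU depth, is non-decreasing up to $t'$. Your write-up is in fact more explicit than the paper's on the ``freezing'' step---the paper simply asserts that $v$'s depth remains unchanged since $t''$, whereas you spell out why intermediate items on an interchange path are returned to their original servers.
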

\begin{proof}
For the case $t=0$, since initially $v$ is at the same depth in both
trees, the claim follows trivially.
If $t > 0$, then let $t''$ be the most recent time before $t'$ that $v$ was moved down. Then at time $t''$, item $v$ was moved from some depth $\lfloor h/2 \rfloor$ to $h$. At time $t''$, according to Lemma \ref{lem:twicedepth}, the depth of $v$ in an MRU tree was at least $\lfloor h/2 \rfloor$. Clearly $v$'s depth remains unchanged in $\ONDET$'s tree at time $t'$, since time $t''$ was the most recent move down of $v$. Also since we consider the first request of $v$ after time $t$, it means that the rank of $v$ could only increase between $t$ and $t'$.
So its depth in an MRU tree could not decrease from $\lfloor h/2 \rfloor$.
\end{proof}

We can now prove Theorem \ref{thm:detvs0}.

\begin{proof}[Proof of Theorem \ref{thm:detvs0}]
We analyze the access costs for an arbitrary item $u$ during the entire run of the algorithm.
Let $t_1$ be the time of the first request to $u$ during the execution of $\sigma$. Let $d_1$ be the first time that $u$ was moved down by $\ONDET$. Then define $t_i$, $i > 1$ to be the first time after time $d_{i-1}$ that $u$ is requested. And let
$d_i$ be the first time after $t_i$ that $u$ is moved down by $\ONDET$.
Assume that the depth of $u$ at time $t_i$ is $L$. Then according to Claim \ref{claim:twice} its depth
at an MRU tree is $\lfloor L/2 \rfloor$.
Let $t_i^1, t_i^2, \dots, t_i^j$ denote all the requests for $u$ between $t_i$ and $d_i$. A total of $j$ requests to $u$ without any move down of $u$ by $\ONDET$. We can bound the access cost of $\ONDET$ on these requests as follows. If $j=1$ it is $L$, if $j>1$ then:
\begin{align*}
access(u,t_i^1, t_i^j)(\ONDET) \le L+ \lfloor L/2 \rfloor+...+\lfloor L/2^{j-1} \rfloor \leq 2 L
\end{align*}
On the other hand the access cost of an MRU algorithm for the same set of requests is bounded as follows. If $j=1$ it is $\lfloor L/2 \rfloor$, if $j>1$ then,
\begin{align*}
access(u,t_i^1, t_i^j)(\MRU) \geq \lfloor L/2 \rfloor+ 1+1+...+1 \geq \lfloor L/2 \rfloor + j - 1 \ge L/2
\end{align*}
Therefore, for each $i$ we have
\begin{align*}
\frac{access(u,t_i^1, t_i^j)(\ONDET)}{access(u,t_i^1, t_i^j)(\MRU)}\leq 4
\end{align*}
This leads to the results that the total access cost for $u$
in $\ONDET$ is $4$-competitive to the total access for $u$
in an MRU tree.
Since this is true for each item in the sequence,
$\ONDET$ is 4-access competitive compared to an $MRU$ tree.
\end{proof}

\dynamicopt*
\begin{proof}[Proof of Theorem \ref{th:dynamicopt}]
Using Theorem \ref{thm:mrucomp} and Theorem~\ref{thm:detvs0}, $\ONDET$ is 16-access competitive.
It is easy to see from Algorithm \ref{alg:movehalfup} that total cost of  $\ONDET$'s tree is 4 times the access cost.
Considering these, $\ONDET$ is 64-competitive.
\end{proof}

In the coming section we show techniques to maintain MRU trees cheaply.
This is another way to maintain dynamic optimality.

\section{Randomized MRU Trees}\label{sec:maintain}

The question of how, and if at all possible, to maintain an MRU tree deterministically (where for each request $\sigma^{(t)}$, $\sigma^{(t)}.depth = \lfloor\log \sigma^{(t)}.\rank \rfloor$) at low cost is still an open problem.
But, in this section we show that the answer is affirmative with two relaxations:
namely by using randomization and approximation.
We believe that the properties of the algorithm we describe next
may also find applications
in other settings, and in particular data structures like skip lists~\cite{dean2007exploring}.

At the heart of our approach lies an algorithm to maintain
a constant approximation of the MRU tree at any time.
First we define MRU$(\beta)$ trees for any constant $\beta$.
\begin{definition}[MRU($\beta$) Tree]\label{def:blru}
A tree $T$ is called an \emph{MRU}$(\beta)$ tree
if it holds for any item $u$ and any time that, $u.\depth = \lfloor\log u.\rank \rfloor +\beta$.
\end{definition}
Note that, any MRU$(0)$ tree is also an MRU tree.
In particular, we prove in the following that a constant additive approximation
is sufficient to obtain dynamic optimality.

\begin{theorem}\label{thm:mrucompbeta}
Any online~MRU$(\beta)$ algorithm is $4(1+\lceil\frac{\beta}{2}\rceil)$ access-competitive.
\end{theorem}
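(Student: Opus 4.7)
The plan is to mirror the potential-function argument of Theorem~\ref{thm:mrucomp}, carefully tracking where the $\beta$-offset enters the calculations. I would keep the same notion of \emph{bad pairs} and the same potential $\Phi = \log B(\OPT) - \log B(\ON)$. Crucially, an MRU$(\beta)$ tree still preserves the rank-to-depth ordering (a smaller-rank item never sits strictly deeper than a larger-rank one), so $\ON$ has no bad pairs, $B(\ON) = 1$, and $\Phi = \log B(\OPT)$ throughout, exactly as in the MRU proof.

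The main quantitative change appears in $\Delta\Phi_1$, the potential drop caused by $\ON$'s post-access rebalancing. In the MRU case, the inequality $B_r \geq 2^{k-j}$ (for the server $s_r$ hosting the requested item $v$ in $\OPT$, with $k$ the depth in $\ON$ and $j < k$ the depth in $\OPT$) used that items at depth $\ell$ in $\ON$ carry ranks in the window $[2^\ell, 2^{\ell+1})$. In an MRU$(\beta)$ tree the same window shifts to $[2^{\ell-\beta}, 2^{\ell-\beta+1})$, so the parallel counting gives $B_r \geq 2^{k-\beta-j}$ whenever $j + \beta < k$, and hence
\[
\Delta\Phi_1 \;\leq\; \log \frac{2^j}{B_r} \;\leq\; 2j + \beta - k.
\]
The shallow case $j + \beta \geq k$ is handled as in the $j \geq k$ branch of Theorem~\ref{thm:mrucomp}, giving $\Delta\Phi_1 \leq j + \beta$. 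The bound $\Delta\Phi_2 \leq 4m$ from $\OPT$'s $m$ intermediate swaps concerns only the local effect of a single swap on $\OPT$'s tree and does not involve $\beta$, so it carries over verbatim.

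Combining these gives the per-request amortized bound
\[
\acost^{(t)}(\ON) \;\leq\; k + \Delta\Phi_1 + \Delta\Phi_2 \;\leq\; 2j + \beta + 4m.
\]
The hard step---and where the factor $1 + \lceil \beta/2 \rceil$ enters---is converting this into a competitive ratio against $\cost^{(t)}(\OPT) = j + m$. I would do this by a case split: when $j + m \geq \lceil\beta/2\rceil$, the target inequality $2j + \beta + 4m \leq 4(1 + \lceil\beta/2\rceil)(j + m)$ follows directly by elementary arithmetic; when $\OPT$ is essentially free on a request, the accumulated $\beta$-contributions are charged against $\OPT$'s future swaps via a telescoping argument, or alternatively against the working-set lower bound of Theorem~\ref{th:lowerbd}. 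I expect the absorption of the additive $\beta m$ term to be the main obstacle, requiring the tightest care. Telescoping the potential across the sequence then yields the stated $4(1+\lceil\beta/2\rceil)$ access-competitiveness.
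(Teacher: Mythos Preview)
Your approach is genuinely different from the paper's. The paper does \emph{not} redo the potential-function argument; it proves Theorem~\ref{thm:mrucompbeta} in essentially two lines by reducing to Theorem~\ref{thm:mrucomp} as a black box. The observation is purely per-request: if an item has MRU depth $k$, then its MRU$(\beta)$ depth is at most $k+\beta$, and for $k\ge 2$ one has $k+\beta\le(1+\lceil\beta/2\rceil)k$. After handling $k\in\{0,1\}$ separately (the rank-$1$ and rank-$2$ items sit at depths $0$ and $1$), this yields $\cost(\ON(\beta))\le(1+\lceil\beta/2\rceil)\,\cost(\ON(0))$, and the factor~$4$ comes from Theorem~\ref{thm:mrucomp}.

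Your reworked potential argument has a genuine gap at the step you yourself flag as ``the main obstacle.'' From $\acost^{(t)}\le 2j+\beta+4m$ you cannot reach a multiplicative bound against $\cost^{(t)}(\OPT)=j+m$ when $j=m=0$: the requested item sits at $\OPT$'s root and $\OPT$ performs no swaps, so $\OPT$ pays~$0$ while your bound leaves an additive~$\beta$. Over $T$ such requests this is an unabsorbed $\beta T$. Neither proposed fix closes this. The working-set lower bound contributes $\log 1=0$ for repeated accesses to a rank-$1$ item, so $WS(\sigma)$ can stay bounded while $T\to\infty$; and there are no ``future swaps of $\OPT$'' to telescope against if $\OPT$ simply never moves. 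The paper's reduction sidesteps exactly this difficulty: it compares to $\ON(0)$ rather than directly to $\OPT$, and the special-casing of small $k$ is precisely what neutralises the additive~$\beta$ where it would otherwise dominate. If you want to salvage your route, you would need to tighten the amortized bound in the $j=0$ case (the actual access cost there is~$0$ once the rank-$1$ item is at the root, so the slack is in your inequality, not in the algorithm), rather than try to charge the loose~$\beta$ elsewhere.
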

\begin{proof}
According to Theorem \ref{thm:mrucomp}, MRU$(0)$ trees are 4 access-competitive. Here we only need to prove it for $\beta>0$.
Let us consider an algorithm $\ON$($\beta$) that maintains
an MRU($\beta$) tree for some $\beta$.
For each request $v$ that $\ON$($\beta$) needs to serve,
if $v.\depth = k$ is an MRU tree, then $\ON$($\beta$)  needs to pay,
in the worst case, $k+\beta$ (while $\ON$($0$) will pay $k$).
According to $\ON$($\beta$), the
item of rank 1 is always at depth 0 and
the item of rank 2 is always at level 1.
For every level $k \ge 2$, we have, $k+\beta \leq (1+ \lceil\frac{\beta}{2}\rceil) k$. For the special case of $k=1$, the item with rank $3$
can also be at most at depth $2$, so the formula holds. Overall, using Theorem \ref{thm:mrucomp} we have:
\begin{align*}
\cost(\ON(\beta)) \le (1+ \lceil\frac{\beta}{2}\rceil) \cost(\ON(0)) \le 4(1+ \lceil\frac{\beta}{2}\rceil) \cost(\OPT)
\end{align*}
Hence $\ON$ is  $4(1+ \lceil\frac{\beta}{2}\rceil)$ access-competitive.
\end{proof}

To efficiently achieve an MRU$(\beta)$ tree,
we propose the  \textsc{\ONRAND}
strategy (see Algorithm~\ref{alg:move2root}).
This is a simple randomized strategy which selects
a random path starting at the root, and
then steps down the tree to depth $k=u.\depth$
(the accessed item depth), by choosing uniformly at random between
the two children of each server at each step.
This can be seen as a
simple $k$-step random walk in a directed version of the tree,
starting from the root of the tree.
Clearly, the adjustment cost of \textsc{\ONRAND} is also proportional to $k$ and
its actions are independent of any oblivious online adversary. The main technical challenge of this section is proving the following theorem.
\begin{theorem}\label{th:randompush}
\textsc{\ONRAND} maintains an MRU$(4)$ (Definition \ref{def:blru}) tree in expectation, i.e., the expected depth of the item with rank $r$ is less than
$\log r + 3 < \lfloor \log r \rfloor + 4$ for any sequence $\sigma$ and any time $t$.
\end{theorem}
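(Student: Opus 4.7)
The approach is to track the potential $\Phi_v = 2^{D_v}$ for a tagged item $v$ of current rank $r$, to establish a clean per-step recurrence for $\Expct[\Phi_v]$, and then to apply Jensen's inequality to convert a bound on $\Expct[2^{D_v}]$ into one on $\Expct[D_v]$. First I would fix an item $v$ with current rank $r$ and let $t_v$ be the time of its most recent access, at which point $v$ was moved to the root and $D_v=0$. Let $D_j$ be $v$'s depth after the $j$-th random walk performed between $t_v$ and the present, and let $k_j$ be the length of that walk (the depth at which the accessed item sat just before the access). Since \textsc{\ONRAND} samples the walk's coin flips fresh and independently of $v$'s current server, the walk passes through $v$'s server iff $k_j \ge D_{j-1}$ and the first $D_{j-1}$ bits match $v$'s address---a conditional probability of $2^{-D_{j-1}}\mathbf{1}[k_j \ge D_{j-1}]$. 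A direct calculation then gives
\[
\Expct\!\left[\,2^{D_j}-2^{D_{j-1}}\;\middle|\;\mathcal{H}_{j-1}\right]=\mathbf{1}[k_j \ge D_{j-1}],
\]
and telescoping yields $\Expct[2^{D_m}] = 1 + \sum_{j=1}^{m}\Prb[k_j \ge D_{j-1}]$.

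The heart of the proof is to bound this sum by $O(r)$ rather than by $O(m)$. The trivial estimate $\Prb[\,\cdot\,]\le 1$ is too weak, since the total number of walks $m$ between $t_v$ and the present can vastly exceed $r-1$ when items are re-accessed. The plan is to exploit the fact that a walk can push $v$ down only when the accessed item is at depth at least $D_{j-1}$, and that repeated accesses to an item $u$ occur shortly after $u$ has been reset to the root and hence tend to be short. To formalize this I would prove, by induction on time (so that the invariant is available for every prior step), the stronger invariant that $\Expct\!\left[2^{\mathrm{dep}(u,\tau)}\right] \le c\cdot u.\rank^{(\tau)}$ for every item $u$, every time $\tau \le t$, and some absolute constant $c$. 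Applied to $u_j$, this yields a Markov-style estimate of $\Prb[k_j \ge D_{j-1}]$ in terms of the rank of $u_j$ at its access time.

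To aggregate I would split the sum over $j$ by the distinct item accessed. The $m$ walks touch exactly the $r-1$ items $v_{(1)},\dots,v_{(r-1)}$ of rank strictly less than $r$, and the repeated accesses to any fixed $v_{(i)}$ contribute only geometrically because each such re-access happens when $v_{(i)}$ is freshly near the root, where the inductive invariant gives a tiny value of $\Expct[2^{\mathrm{dep}(v_{(i)})}]$. Summing produces $\sum_j \Prb[k_j \ge D_{j-1}] \le 8r-1$ after absorbing constants, so $\Expct[2^{D_m}] \le 8r$, and Jensen's inequality applied to the concave function $\log$ gives $\Expct[D_m]\le \log(8r) = \log r + 3$, which is the stated bound.

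The main obstacle is precisely this inductive invariant: establishing it without circularity (since the depth of $u_j$ at its access time is itself an output of the same random process we are analysing), and carefully handling the dependence between $k_j$ and $D_{j-1}$, which are both measurable with respect to the shared prior history and therefore not independent. Once the invariant plus Markov step is in hand, the final aggregation into an $O(r)$ bound is a routine geometric-series calculation and the conclusion follows by Jensen.
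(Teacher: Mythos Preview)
Your outer framework is sound and genuinely different from the paper's. The telescoping identity
\[
\Expct\bigl[2^{D_m}\bigr] \;=\; 1 + \sum_{j=1}^{m}\Prb\bigl[k_j > D_{j-1}\bigr]
\]
is correct (the indicator should be strict, $k_j>D_{j-1}$, since a walk of length exactly $D_{j-1}$ leaves $v$ at the same depth), and applying Jensen to $\log$ at the end is legitimate. In fact your sum $\sum_j\Prb[k_j>D_{j-1}]$ is \emph{exactly} the quantity $\Expct[W_i]$ that the paper isolates in Lemma~\ref{lem:walk}. Had you invoked that lemma ($\Expct[W_i]\le 2r-1$) instead of your inductive plan, your route would yield $\Expct[2^{D_m}]\le 2r$ and hence $\Expct[D_m]\le \log r+1$, which is sharper than the paper's $\log r+3$ obtained via the Markov-chain coupling, stochastic dominance, and concavity of $w\mapsto\Expct[\ell(i,w)]$.

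The genuine gap is in how you propose to bound $\sum_j\Prb[k_j>D_{j-1}]$. Your ``Markov-style estimate'' would require something like $\Prb[k_j>D_{j-1}]\le \Expct[2^{k_j}]/2^{D_{j-1}}$, but $D_{j-1}$ is random and \emph{positively correlated} with $k_j$ (both are driven by the same prior coin flips), so neither a Markov bound with a fixed threshold nor the factorisation $\Expct[2^{k_j-D_{j-1}}]=\Expct[2^{k_j}]\cdot\Expct[2^{-D_{j-1}}]$ is available. The subsequent ``geometric'' aggregation over repeat accesses is also unsupported: summing $c\cdot\rank(u_j)$ over repeat accesses of a fixed item does not telescope to $O(r)$, because those ranks can themselves sum to order $m\gg r$. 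Consequently the induction does not close with any fixed constant $c$ along the line you sketch. The paper resolves exactly this step by a completely different argument: it first proves the monotonicity $\Expct[D(i)]>\Expct[D(j)]$ for $i>j$ (Lemma~\ref{lem:monotone}, a coupling on the first time two items share a depth), and then uses that monotonicity to argue that each lower-rank item contributes at most one ``deep re-access'' in expectation (Lemma~\ref{lem:walk}). That is the missing ingredient in your plan.
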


To analyze  \textsc{\ONRAND} and eventually prove Theorem \ref{th:randompush}, we will define several random variables for an arbitrary $\sigma$ and time $t$ (so we ignore them in the notation).
W.l.o.g., let $v$ be the item with rank $i$ at time $t$ and let $D(i)$ denote the depth of $v$ at time $t$.
First we note that by induction, it can be shown that the support of $D(i)$ is the set of integers $\{0,1, \dots, i-1\}$.
\begin{figure*}[t]
\begin{center}
\includegraphics[width=0.7\textwidth]{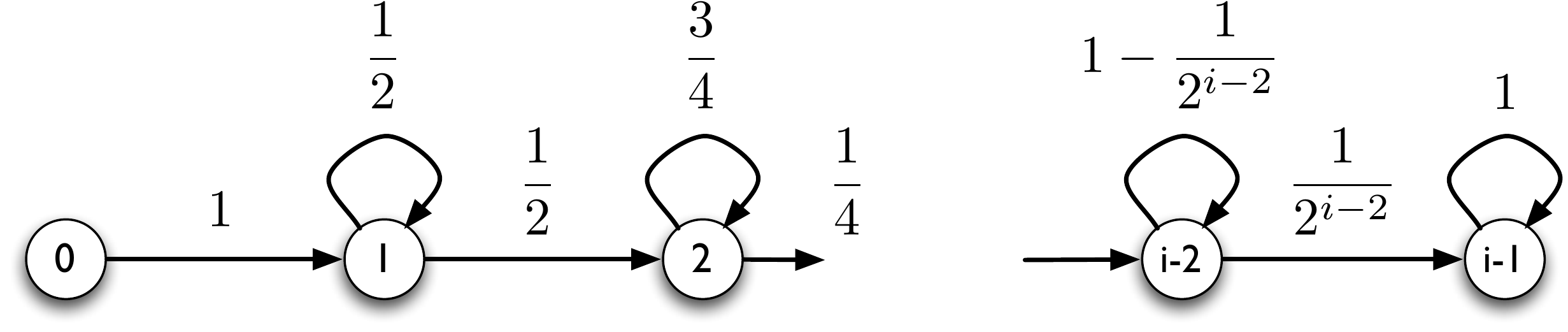}
\caption{The Markov chain $\mathcal{M}_i$ that is used to prove
Theorem~\ref{th:randompush} and Lemma \ref{lem:expst}:
possible depths for item of rank $i$ in the complete tree.}
\label{fig:chain}
\end{center}
\end{figure*}

To understand and upper bound $D(i)$,
we will use a Markov chain $\mathcal{M}_i$ over the integers
$0, 1, 2 \dots, i-1$, which denote the
possible depths in the \textsc{\ONRAND} tree, 
see Figure~\ref{fig:chain}.
For each depth in the chain $0 \le j < i-1$, the probability to move to depth $j+1$ is $2^{-j}$,
and the probability to stay at $j$,
is $1-2^{-j}$, for $j = i-1$; it is an absorbing state.
This chain captures the idea that the probability of an item at level $j$ to be pushed down the tree by a random walk
(to level larger than $j$) is  $2^{-j}$.
The chain \emph{does not} describe exactly our \textsc{\ONRAND} algorithm and $D(i)$, but we will use it to prove an upper bound on $D(i)$.
First, we consider a random walk described exactly by the Markov chain
$\mathcal{M}_i$ with an initial state $0$.
Let $\ell(i, w)$ denote the random variable of the last state of a
random walk of length $w$ on $\mathcal{M}_i$.
Then we can show:

\begin{lemma}
\label{lem:expst}
The expected state of $\ell(i, w)$ is such that ~$\exp[\ell(i, w)] <\lceil \log w \rceil + 1$,
and~$\exp[\ell(i, w)]$ is concave in $w$.
\end{lemma}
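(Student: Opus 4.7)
The plan is to exploit an exponential-moment identity together with the pointwise monotonicity of the walk. First I would compute the one-step conditional expectation of $2^{\ell(i,w)}$: from any state $j < i-1$ in $\mathcal{M}_i$, the chain advances with probability $2^{-j}$ and stays with probability $1-2^{-j}$, so
\[
\exp\!\left[2^{\ell(i,w+1)} \mid \ell(i,w)=j\right] \;=\; 2^{j+1}\cdot 2^{-j} \;+\; 2^j\cdot(1-2^{-j}) \;=\; 2^j + 1.
\]
At the absorbing state $j=i-1$ the conditional expectation is only $2^{i-1}$ (no $+1$), so taking total expectation gives $\exp[2^{\ell(i,w+1)}] \le \exp[2^{\ell(i,w)}] + 1$. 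Since $\ell(i,0)=0$, an easy induction on $w$ yields the key identity $\exp[2^{\ell(i,w)}] \le w+1$.

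For the first claim of the lemma, I would then apply Jensen's inequality to the concave function $\log_2$:
\[
\exp[\ell(i,w)] \;=\; \exp\!\left[\log_2 2^{\ell(i,w)}\right] \;\le\; \log_2 \exp\!\left[2^{\ell(i,w)}\right] \;\le\; \log_2(w+1).
\]
For any $w\ge 1$, $\log_2(w+1) \le \log_2(2w) = 1 + \log_2 w \le 1 + \lceil\log_2 w\rceil$, delivering the desired bound. A quick case check at $w=1$ (where the chain sits deterministically at state $1$) converts this to the strict inequality stated in the lemma for $w\ge 2$.

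For the concavity claim, I would compute the first difference $\exp[\ell(i,w+1)] - \exp[\ell(i,w)]$ directly. Conditioning on the current state, the expected one-step increment equals $\exp[a(\ell(i,w))]$, where $a(j) = 2^{-j}$ for $j < i-1$ and $a(i-1)=0$. The function $a$ is non-increasing on $\{0,1,\dots,i-1\}$, while the walk $\ell(i,\cdot)$ is non-decreasing pointwise (the chain never moves backward). Composing, $a(\ell(i,w))$ is a pointwise non-increasing function of $w$, and so is its expectation. Hence $\exp[\ell(i,w)]$ has non-increasing first differences, which is precisely concavity on the integers.

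The main obstacle is essentially just \emph{finding} the exponential-moment identity above; everything else is mechanical. The reason this moment works is that the state-dependent advance probability $2^{-j}$ exactly cancels the factor-of-two jump $2^j \to 2^{j+1}$, collapsing an awkward non-linear recursion on the distribution of $\ell(i,w)$ into the linear recursion $\exp[2^{\ell(i,w+1)}] \le \exp[2^{\ell(i,w)}] + 1$. Without this cancellation, both bounding the expectation and cleanly expressing the first difference for the concavity argument would be considerably harder; with it, the lemma follows almost immediately.
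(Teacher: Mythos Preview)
Your proof is correct and takes a genuinely different route from the paper. For the upper bound, the paper argues by coupling: it introduces a modified chain $\mathcal{M}'_i$ that agrees with $\mathcal{M}_i$ up to state $k=\lceil\log w\rceil$ but from there on advances with the larger probability $1/w\ge 2^{-j}$; a walk on $\mathcal{M}'_i$ started already at state $k$ dominates $\ell(i,w)$, and its expected progress in $w$ steps is computed (via a binomial identity, their Corollary) to be exactly~$1$, giving $\exp[\ell(i,w)]<k+1$. Your exponential-moment recursion $\exp[2^{\ell(i,w+1)}]\le\exp[2^{\ell(i,w)}]+1$ bypasses the auxiliary chain entirely and finishes with Jensen; it is shorter, self-contained, and in fact yields the slightly sharper bound $\log_2(w+1)$. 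For concavity the paper merely asserts that the increments are decreasing, whereas your pointwise argument via the non-increasing composition $a(\ell(i,w))$ actually supplies a proof. One minor remark: as you implicitly notice, the strict inequality of the lemma fails at $w=1$ (there $\ell(i,1)=1=\lceil\log 1\rceil+1$ for $i\ge 2$); the paper's own argument has the same defect, and the only downstream use of the lemma needs just the non-strict inequality.
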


To prove Lemma \ref{lem:expst} we use the following corollary.
\begin{corollary}
\label{cor:series}
\begin{align*}
\sum_{i=0}^w {w \choose i} (\frac{w-1}{w})^{w-i} \cdot (\frac{1}{w})^i \cdot i = 1
\end{align*}
\end{corollary}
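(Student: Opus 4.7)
The plan is to recognize the sum as the expectation of a standard random variable, and then to verify it by a short algebraic manipulation.

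First I would observe that the terms $\binom{w}{i}\bigl(\tfrac{w-1}{w}\bigr)^{w-i}\bigl(\tfrac{1}{w}\bigr)^i$ are exactly the binomial probabilities for a random variable $X\sim\mathrm{Binomial}(w,\,1/w)$, where we interpret $1/w$ as the success probability and $(w-1)/w$ as the failure probability. Thus the left-hand side is $\sum_{i=0}^{w}\Prb(X=i)\cdot i=\Expct[X]$. Since the expectation of a binomial random variable with $w$ trials and success probability $p$ is $wp$, plugging in $p=1/w$ immediately gives $\Expct[X]=1$.

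For readers preferring a purely algebraic derivation, I would use the identity $i\binom{w}{i}=w\binom{w-1}{i-1}$. Applied to the sum, this lets me pull out a factor of $w\cdot\tfrac{1}{w}=1$ and reindex $j=i-1$, yielding
\begin{align*}
\sum_{i=0}^{w}\binom{w}{i}\left(\tfrac{w-1}{w}\right)^{w-i}\left(\tfrac{1}{w}\right)^{i} i
\;=\;\sum_{j=0}^{w-1}\binom{w-1}{j}\left(\tfrac{w-1}{w}\right)^{(w-1)-j}\left(\tfrac{1}{w}\right)^{j}.
\end{align*}
By the binomial theorem the right-hand side equals $\bigl(\tfrac{w-1}{w}+\tfrac{1}{w}\bigr)^{w-1}=1^{w-1}=1$, as required.

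There is no real obstacle here: the statement is a textbook identity, and the only thing to be careful about is the edge case $w=1$ (where the sum degenerates to $\binom{1}{1}\cdot 0^{0}\cdot 1^{1}\cdot 1=1$, so the identity still holds under the standard convention $0^{0}=1$). I would mention this briefly to make the corollary precise, and otherwise keep the proof to a couple of lines using either the probabilistic interpretation or the binomial-theorem reindexing above.
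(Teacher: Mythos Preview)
Your proposal is correct and your algebraic derivation is essentially identical to the paper's: both use $i\binom{w}{i}=w\binom{w-1}{i-1}$, reindex, and finish with the binomial theorem. Your additional probabilistic reading (the sum is $\Expct[X]$ for $X\sim\mathrm{Binomial}(w,1/w)$) is a nice framing the paper does not make explicit, but the core argument is the same.
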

\begin{proof}
\begin{align*}
\sum_{i=0}^w & {w \choose i}(\frac{w-1}{w})^{w-i} (\frac{1}{w})^i i =\sum_{i=1}^w {w \choose i} (\frac{w-1}{w})^{w-i} (\frac{1}{w})^i i \\
&=\sum_{i=1}^w \frac{w!}{i!\ (w-i)!} (\frac{w-1}{w})^{w-i} (\frac{1}{w})^i i
=\sum_{i=1}^w \frac{(w-1)!}{(i-1)!\ (w-i)!} (\frac{w-1}{w})^{w-i} (\frac{1}{w})^{i-1} \\
&=\sum_{\eta=0}^{w-1} \frac{(w-1)!}{\eta!\ (w-1-\eta)!} (\frac{w-1}{w})^{w-1-\eta} (\frac{1}{w})^{\eta}~~~\text{(put $i=\eta+1$)}\\
&= \left(\frac{1}{w}+\frac{w-1}{w}\right)^{w-1}=1
\end{align*}
\end{proof}

\begin{proof}[Proof of Lemma \ref{lem:expst}]
First note that $\exp[\ell(i, w)]$ is strictly monotonic in $w$ and can be shown to be concave using the decreasing \emph{rate} of increase: for $w_1 < w_2$,
$\exp[\ell(i, w_1)] - \exp[\ell(i, w_1-\Delta)] \ge \exp[\ell(i, w_2)] - \exp[\ell(i, w_2-\Delta)]$.
To bound $\exp[\ell(i, w)]$, we consider the state of
another random walk, $\ell'(i, w)$, that starts on state $k=\lceil \log w \rceil$ in a modified chain $\mathcal{M}'_i$.
The modified chain $\mathcal{M}'_i$ is identical to $\mathcal{M}_i$ up to state $k$, but for all states $j > k$, the probability to move to state $j+1$ is $\frac{1}{w} > 2^{-k}$ and the probability to stay at $j$ is $\frac{w-1}{w} < 1-2^{-k}$. So clearly the walk on $\mathcal{M}'_i$ makes faster progress than the walk on $\mathcal{M}_i$ from state $k$ onward. The expected progress of the walk on  $\mathcal{M}'_i$ which starts from state $k$, is now easier to bound and can shown to be: $\exp[\ell'(i, w)] \le \sum_{j=0}^w j {w \choose j} (\frac{w-1}{w})^{w-j} (\frac{1}{w})^j = 1$.
But since $\ell(i, w)$ starts at state $0$, we have $\exp[\ell(i, w)] < \exp[\ell'(i, w)] \le k+1 = \lceil \log w \rceil +1$.
\end{proof}

Next, in Lemma  \ref{lem:walk}, we bound the expected number of times that $v$
could potentially be pushed down by a random push, i.e.,
the number of requests to elements at a lower depth than $v$. Later we will use this as the length $w$ of the random walk
on  $\mathcal{M}_i$. But, to do so we first state the following lemma.

\begin{lemma}\label{lem:monotone}
 For every $\sigma, t$ and $i > j$, we have that in \textsc{\ONRAND}, $\exp[D(i)] > \exp[D(j)]$.
\end{lemma}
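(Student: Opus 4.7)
The plan is to exploit the ordering of last access times: for $i > j$, by definition of rank we have $t_i < t_j \le t$, where $t_k$ denotes the last access time of $v_k$ before $t$. Each item is placed at the root when accessed, and since neither $v_i$ nor $v_j$ is accessed again in the respective intervals $(t_i, t]$ and $(t_j, t]$, their depths are non-decreasing thereafter.

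I would first establish that $D(v_i, t_j) \ge 1$ deterministically. The request at time $t_i + 1$ cannot be to $v_i$ itself (otherwise $t_i$ would not be the last access of $v_i$), so it is to some item $u$ at depth at least $1$ (since $v_i$ occupies the root right after time $t_i$). The random walk associated with this request has length at least $1$ and begins at the root, so it deterministically pushes the root's occupant, $v_i$, down to depth $1$. Monotonicity of $v_i$'s depth then gives $D(v_i, t_j) \ge 1$, while $D(v_j, t_j) = 0$ by construction.

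Next I would employ a monotone coupling of the two marginal depth processes over $[t_j, t]$. Each evolves as a Markov chain in which, from depth $\ell$, the next depth is $\ell + 1$ with probability $2^{-\ell}$ whenever the current request has depth $k > \ell$, and stays $\ell$ otherwise. Coupling the two processes with a single $U \sim U[0,1]$ per step --- declaring the item at depth $\ell$ pushed iff $U < 2^{-\ell}$ --- preserves each marginal, and since $2^{-\ell}$ is decreasing in $\ell$ and $v_i$ is always at the larger depth, whenever $v_i$ is pushed so is $v_j$. Hence the gap $\Delta_s := D(v_i, s) - D(v_j, s)$ is non-increasing, starts at a value $\ge 1$ at $s = t_j$, and stays $\ge 0$ throughout.

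For the strict inequality, I would lower bound $\Prb[\Delta_t \ge 1]$. When $\Delta_s = 1$ and $v_j$ is at depth $d$, the coupling gives probability $2^{-d} - 2^{-(d+1)} = 2^{-(d+1)} < 1$ for a gap drop at step $s+1$; over the finite horizon $(t_j, t]$, the probability of never dropping is a strictly positive product, so $\Prb[\Delta_t \ge 1] > 0$ and $\exp[D(v_i, t)] - \exp[D(v_j, t)] = \exp[\Delta_t] > 0$. The main obstacle I anticipate is justifying the shared-$U$ coupling: in the real algorithm, pushes of $v_i$ and $v_j$ at the same depth are disjoint rather than positively correlated, but since the lemma compares only marginal expectations, it suffices that the coupling preserve each marginal, which it does by construction.
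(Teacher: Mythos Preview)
Your opening observation is correct and useful: at time $t_j$, item $v_i$ is already at depth at least~$1$ while $v_j$ is at the root, and neither is requested in $(t_j,t]$. The paper uses this too.

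The gap is in your coupling step. You assert that each marginal depth process ``evolves as a Markov chain in which, from depth~$\ell$, the next depth is $\ell+1$ with probability $2^{-\ell}$ whenever the current request has depth $k>\ell$,'' and that the shared-$U$ coupling therefore ``preserves each marginal \ldots\ by construction.'' But the request depth $K_s$ is not an exogenous driving sequence: it is a function of the entire tree state at time $s-1$, which in turn depends on every previous random-walk path, including those that determined whether $v_i$ and $v_j$ were pushed. If you take $K_s$ from one true run of the process and then re-randomize the push indicators of $v_j$ via fresh uniforms $U_s$, you change which items get shifted along each walk, which changes where $\sigma^{(s+1)}$ sits, which changes $K_{s+1}$---except that in your construction $K_{s+1}$ is held fixed at its value from the original run. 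So the joint law of $(\hat D_j(s-1),K_s)$ need not match that of $(D(j,s-1),K_s)$, and hence $\hat D_j(t)$ need not have the distribution of $D(j,t)$. The obstacle you flag (disjointness versus positive correlation of pushes at equal depth) is a red herring; the genuine obstacle is that re-randomizing pushes while freezing $(K_s)$ breaks the marginals.

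The paper avoids this entirely by staying inside the single true process. It conditions on $D_{uv}$, the depth at which $v_j$ first catches up to $v_i$ after time $t_j$ (or $-1$ if this never occurs). On $\{D_{uv}=-1\}$ one has $D(i)>D(j)$ pointwise. On $\{D_{uv}=k\}$ the two items occupy two distinct servers at the same depth~$k$ at the meeting time; since neither is requested in $(t_j,t]$ and the random walks are oblivious to which item sits at which server, swapping the labels $v_i\leftrightarrow v_j$ at that moment is a measure-preserving bijection of the process that exchanges the two conditional depth distributions, giving $\exp[D(i)\mid D_{uv}=k]=\exp[D(j)\mid D_{uv}=k]$. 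No re-randomization is needed, so the endogeneity of $K_s$ never enters. If you want to repair your argument along coupling lines, the cleanest route is exactly this label-swap symmetry rather than a shared-$U$ construction.
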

\begin{proof}
Let $u$ be an item with rank $j < i$. Hence, it was requested
more recently than $v$ (which has rank $i$). The inequality follows from
the fact that conditioning
that $v$ and $u$ first reached the \emph{same}
depth (after the last request of $u$)
then by symmetry their expected progress of depth will be the same from that point.
More formally, let $D_{uv}$ be a random variable that denotes
the depth when $u$'s depth equals the depth of $v$ for the \emph{first} time
(since the last request of $u$ where its depth is set to 0);
and $-1$ if this never happens. Then by the law of
total probability,
\begin{align*}
\exp[D(i)] = \exp_{D_{uv}}[\exp[D(i) \mid D_{uv}]] \notag = \sum_{k=-1}^{i-1} \Prb(D_{uv}=k)  \sum_{\ell=0}^{i-1} \ell \cdot \Prb(D(i)\notag=\ell \mid D_{uv}=k)
\end{align*}
(and similar for $D(j)$).
But since
the random walk (i.e., push) is independent of the servers' ranks,
we have for $k \ge 0$ that $\exp[D(i) \mid D_{uv}=k] \ge \exp[D(j) \mid D_{uv}=k]$.
But additionally there is the possibility that they will never be at
the same depth (after the last request of $u$)
and that $v$ will always have a higher depth,
so $\exp[D(i) \mid D_{uv}=-1] > \exp[D(j) \mid D_{uv}=-1]$.
The claim follows.
\end{proof}

Now, let $W_i$ be a random variable that denotes the number of requests
for items with higher depth than $v$, since $v$'s last request until time $t$.
The following lemma bounds the number of such requests.

\begin{lemma}\label{lem:walk}
The expected number of requests for items with
higher depth than $v$, since $v$ was last requested,
is bounded by $\exp[W_i] \le 2i-1$.
\end{lemma}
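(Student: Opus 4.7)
The plan is to bound $\exp[W_i]$ by decomposing it across the $i-1$ distinct items accessed in the interval $(T, t]$, where $T$ is the time of $v$'s last request. Let $u_1, \ldots, u_{i-1}$ denote these items, ordered so that $u_j$ has rank $j$ at time $t$, and for each $j$ let $Y_j$ count the requests to $u_j$ in the interval that occur while $u_j.\depth > v.\depth$. Then $W_i = \sum_{j=1}^{i-1} Y_j$, and by linearity $\exp[W_i] = \sum_j \exp[Y_j]$.

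First I would charge the first occurrence of each $u_j$ in the interval against $Y_j$, contributing at most $1$ per item and hence at most $i-1$ in total. The delicate contributions are from the re-accesses. At any re-access time $s$ of $u_j$, the rank of $u_j$ at $s$ is strictly less than the rank of $v$ at $s$, because $u_j$'s previous access within the interval is strictly more recent than $v$'s last access at $T$. By Lemma \ref{lem:monotone}, this gives $\exp[D(u_j)^{(s)}] \le \exp[D(v)^{(s)}]$ and is the source of our intuition that $u_j$ rarely sits above $v$ at time $s$.

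To make this quantitative, I would couple the \ONRAND\ random walks affecting $u_j$ and $v$ during the interval. The push events that can raise $u_j$'s depth form a strict suffix of the ones affecting $v$, because $v$ restarts at depth $0$ strictly earlier (at time $T$) and so absorbs strictly more pushes in expectation. This lets me argue that the expected number of re-accesses on which $u_j$ overtakes $v$ in depth is small; summing these re-access contributions across the $i-1$ items yields an additional bound of at most $i$, for a total of $\exp[W_i] \le (i-1) + i = 2i - 1$.

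The main obstacle is making this coupling rigorous. Random walks triggered by accesses to items other than $u_j$ and $v$ affect both particles simultaneously, so their depth processes are subtly correlated; moreover, $u_j$ resets to depth $0$ at each of its accesses while $v$ does not. A clean route is to bound, between two consecutive re-accesses of $u_j$, the expected number of instants at which $u_j$'s depth can exceed $v$'s, leveraging the concavity and monotonicity properties already established in Lemmas \ref{lem:expst} and \ref{lem:monotone} for the chain $\mathcal{M}_i$, and then to telescope this bound across the sequence of re-accesses of $u_j$.
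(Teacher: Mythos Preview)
Your decomposition is essentially the one in the paper, though phrased differently. Observe that the \emph{first} access to $u_j$ in the interval $(T,t]$ is exactly an access in which $u_j$ has \emph{higher} rank than $v$ (since $u_j$ was last seen before $T$), while every \emph{re-access} of $u_j$ is one in which $u_j$ has \emph{lower} rank than $v$. So your split into ``first occurrence'' versus ``re-access'' coincides with the paper's split of $W_i$ into $W_i^{>}$ (requests of higher rank than $v$) and $W_i^{<}$ (requests of lower rank than $v$). Your bound of $i-1$ on the first part is exactly the paper's observation that each such request increases $v$'s rank by one.

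The gap is in the re-access part, and you identify it yourself: you do not have a rigorous argument that the expected number of re-accesses of $u_j$ occurring while $u_j.\depth > v.\depth$ is at most a constant. The coupling you sketch is vague, and the correlations you worry about are real; the route through Lemma~\ref{lem:expst} and telescoping does not obviously close. The paper avoids this entirely. Writing $X_j$ for the number of such re-accesses of $u_j$, it claims directly that $\exp[X_j]\le 1$, by contradiction with Lemma~\ref{lem:monotone}: if $\exp[X_j]>1$ for some $j<i$, one could exhibit a sequence on which the lower-rank item $u_j$ has strictly larger expected depth than $v$, violating monotonicity of expected depth in rank. Summing over $j$ gives $\exp[W_i^{<}]\le i-1$, and together with the first part this yields $\exp[W_i]\le 2i-1$. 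So rather than building a coupling, you should use Lemma~\ref{lem:monotone} as a black-box contradiction device to cap each $\exp[X_j]$ at $1$.
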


\begin{proof}
We can divide $W_i$ into two types of requests:
$W_i^>$ requests for items
with higher rank and depth than $v$ at the time of their request, 
and $W_i^<$ requests for items with lower rank but higher depth than $v$ at the time of their request.
Then $W_i = W_i^> + W_i^<$. Clearly $W_i^>  \le i$
since every such request increases the rank of $v$ and
this happens $i$ times (note that some of these requests
may have lower depth than $v$).
$W_i^<$ is harder to analyze.
How many requests for items are there
that have lower rank than $v$ at the time of the request, but are below $v$ in the tree (i.e., have higher depth than $v$)? Note that such requests do not increase $v$'s rank, but \emph{may} increase its depth.
Let $u$ be an item with rank $j < i$, hence $u$ was more recently
requested than $v$ (maybe several times).
Let $X_j$ denote the number of requests for $u$ (since $v$ was last requested)
in which it had a higher depth than $v$.
Then $W_i^<  = \sum_{j=1}^{i-1} X_j$. We now claim that
$\exp[X_j] \le 1$. Assume by contradiction that  $\exp[X_j] > 1$.
But then this implies that we can construct a sequence $\sigma'$ for which the expected depth of $u$ will be larger than
the expected depth of $v$, contradicting Lemma~\ref{lem:monotone}.
Putting it all together:
\begin{align*}
\exp[W_i] = \exp[W_i^> + W_i^<] \notag \le \exp[i] + \exp[\sum_{n=1}^{i-1} X_n] \le i + \sum_{n=1}^{i-1} \exp[X_n] \le 2i-1
\end{align*}
\end{proof}

We now have all we need to prove Theorem \ref{th:randompush}.
The proof follows by showing that $\exp[D(i)] \le \log i + 3$.

\begin{proof}[Proof of Theorem \ref{th:randompush}]
Let $D(i,w)$ be a random variable that denotes the depth of
$v$ conditioning that there are $w$ requests of items with higher
depth than $v$, since the last request for $v$.
Note that by
the total probability law, we have that
\begin{align*}
\exp[D(i)] = \exp_{W_i}[\exp[D(i,W_i)] = \sum_{w=1}^{\card{\sigma}} \Prb(W_i=w) \exp[D(i,W_i=w)]
\end{align*}
Next we claim that $D(i,w)$ is stochastically less~\cite{shaked2007stochastic} than $\ell(i, w)$, denoted by
$D(i,w) \preceq \ell(i, w)$.

This is true since
the transition probabilities (to increase the depth) in the Markov chain
$\mathcal{M}_i$ are at least as high as in the Markov chain that describes $D(i,w)$.
The probability that a random walk to depth higher than $v$'s depth visits $v$ (and pushes it down)
is exactly $2^{-j}$ where $j$ is the depth of $v$.
Since $D(i,w) \preceq \ell(i, w)$, it will then follow from Theorem~\ref{thm:stocDom} that $\exp[D(i,w)] \le \exp[\ell(i, w)]$.
Clearly we also have
$\exp_{W_i}\exp[D(i,W_i)] \le \exp_{W_i}\exp[\ell(i, W_i)]$.
Let $f_i(W_i)= \exp[\ell(i, W_i)]$ be a random variable which is a function of the random variable~$W_i$. Recall that $f_i()$ is concave, then by Jensen's inequality~\cite{cover2006elements} and Lemma~\ref{lem:walk} we get:
\begin{align*}
\exp[D(i)] &=  \exp_{W_i}[\exp[D(i,W_i)]] \le \exp_{W_i}[\exp[\ell(i, W_i)]] \notag = \exp_{W_i}[f_i(W_i)] \le f_i(\exp[W_i]) \\
& \le f_i(2i-1)= \exp[\ell(i, 2i-1)] \le \lceil \log 2i \rceil  + 1 \le \log i + 3 \notag
\end{align*}
\end{proof}

\begin{algorithm}[t]
\caption{Upon access to $u$ in \textsc{Push-Down Tree}}
\label{alg:move2root}
\begin{algorithmic}[1]
\vspace{2mm}
\STATE~\textbf{access} $s=u.\host$ along tree branches ~~~~~~~~~~~ (cost: $u.\depth$)
\STATE~let $v=s_1.\guest$ be the item at the current root
\STATE~\textbf{move} $u$ to the root server $s_1$, setting $s_1.\guest = u$ \hfill(cost: $u.\depth$)
\STATE~employ \textsc{\ONRAND} to \textbf{shift} down $v$ to depth $s.\depth$ \hfill (cost: $u.\depth$)
\STATE~let $w$ be the item at the end of the push-down path, where $w.\depth=s.\depth$
\STATE~\textbf{move} $w$ to $s$, i.e., setting $s.\guest = w$  \hfill~~~~~~~~~~ (cost: $u.\depth \times 2$)
\end{algorithmic}\label{alg:move2root}
\end{algorithm}

It now follows almost directly from Theorems
\ref{thm:mrucompbeta}~and~\ref{th:randompush}
that \textsc{\ONRAND} is dynamically optimal.

\pushdowntree*
\begin{proof}
Let the $t$-th requested item
have rank $r_t$, then the access cost is $D(r_t)$.
According to the \textsc{\ONRAND} (Algorithm \ref{alg:move2root}), the total cost
is $5D(r_t)$ which is five times the access cost on the MRU$(4)$ tree.
Formally, using Theorem~ \ref{thm:mrucompbeta} and Theorem \ref{th:randompush}, the expected total cost is:
\begin{align*}
 \exp\left[\cost(\ONRAND)\right]& = \exp \left [\sum_{i=1}^{t} 5 D(r_i) \right] = 5\sum_{i=1}^{t} \exp \left [D(r_i) \right]\notag
  \le 5\sum_{i=1}^{t}  (\log (r_i) + 3) \notag \\& \le 5\sum_{i=1}^{t} (\lfloor \log (r_i) \rfloor+ 4)\notag
   \le  5 \sum_{i=1}^{t} \cost^{(t)}(\MRU(4)) \notag \\
   & \le  5\cdot \cost(\MRU(4))= 60 \cdot \cost(\OPT)
\end{align*}
\end{proof}

\section{Related Work}\label{sec:rel}

The self-adjusting tree networks considered in this paper
feature an interesting connection to self-adjusting
datastructures. A key difference is that while datastructures
need to be \emph{searchable}, networks come with \emph{routing}
protocols: the presence of a \emph{map} allows us to trivially
access a node (or item) at distance
$k$ from the front at a cost $k$.
Interestingly, while we have shown in this paper that
dynamically optimal algorithms for tree networks exist,
the quest for constant competitive online algorithms for
binary search trees remains a major open problem~\cite{splaytrees}. Nevertheless, there
are self-adjusting binary search trees that are known to be \emph{access optimal}~\cite{search-optimality}, but their rearrangement cost it too high.

In the following, we first review related work on datastructures
and then discuss literature in the context of networks.

\noindent \textbf{Dynamic List Update: Linked List (LL).}
The dynamically optimal linked list datastructure
is a seminal~\cite{list-update-upperbound}
result in the area: algorithms such as Move-To-Front (MTF),
which moves each accessed element to the front of the
list, are known to be 2-competitive,
which is optimal~\cite{albers1994competitive,albers1998self,list-update-upperbound}.
We note that the Move-To-Front algorithm results in the Most Recently Used property where items that were more recently used are closer to the head of the list.
The best known competitive ratio for randomized algorithms
for LLs is 1.6, which almost matches the randomized lower bound of 1.5~\cite{albers1995combined,teia1993lower}.

\noindent \textbf{Binary Search Tree (BST).}
In contrast to $\ct$s, self-adjustments in BSTs are based on \emph{rotations}
(which are assumed to have unit cost).
While BSTs have the working set property, we are missing
a matching lower bound: the \emph{Dynamic Optimality Conjecture},
the question whether splay trees~\cite{splaytrees} are dynamically
optimal,
continues to puzzle researchers even in the randomized case~\cite{albers2002randomized}.
On the positive side, over the last years,
many deep insights into  the properties of self-adjusting
BSTs have been obtained~\cite{tree-landscape},
including improved (but non-constant) competitive ratios~\cite{tango-trees},
regarding weaker properties such as
working sets,
static,
dynamic,
lazy,
and weighted,
fingers,
regarding pattern-avoidance~\cite{chalermsook2015pattern},
and so on.
It is also known
(under the name \emph{dynamic search-optimality}) that
if the online algorithm is allowed to make rotations for free after each
request, dynamic optimality can be achieved~\cite{search-optimality}.
Known lower bounds are by Wilber~\cite{wilber1989lower},
by Demaine et al.~\cite{almost}'s interleaves bound
(a variation), and by Derryberry et al.~\cite{derryberry2005lower}
(based on graphical interpretations).
It is not known today whether any of these lower bounds
is tight.

\noindent \textbf{Unordered Tree (UT).}
We are not the first to consider \emph{unordered} trees
and it is known that existing lower bounds for (offline) algorithms
on BSTs also apply to UTs that use rotations: Wilber's theorem can be generalized~\cite{fredman2012generalizing}.
However, it is also known that this correspondance between
ordered and unordered trees no longer holds under
weaker measures such as \emph{key independent processing costs} and in particular
\emph{Iacono's measure}~\cite{iacono2005key}:
the expected cost of the sequence
which results from
a random assignment of keys from the search tree to the items specified in an access
request sequence.
Iacono's work is also one example of prior work
which shows that for specific scenarios, working set and dynamic optimality
properties are
equivalent.
Regarding the current work, we note that the reconfiguration operations in UTs are more powerful than
the swapping operations
considered in our paper: a rotation allows to move entire subtrees at unit
costs, while the corresponding cost in $\ct$s is linear in the subtree size.
We also note that in our model, we cannot move freely between levels,
but moves can only occur between parent and child.
In contrast to UTs, $\ct$s are bound to be balanced.

\noindent \textbf{Skip List (SL) and B-Trees (BT).}
Intriguingly, although SLs and BSTs can be transformed
to each other~\cite{dean2007exploring}, Bose et al.~\cite{bose2008dynamic}
were able to prove dynamic optimality for (a restricted kind of) SLs as well as BTs.
Similarly to our paper, the authors rely on
a connection between dynamic optimality
and working set: they show that the working set property
is sufficient for their restricted SLs (for BSTs, it is known that
the working set is an upper bound, but it is not known yet
whether it is also a lower bound).
However, the quest for proving dynamic optimality for general
skip lists remains an open problem:
two restricted types of models were considered in~\cite{bose2008dynamic},
bounded and weakly bounded.
In the bounded model, the adversary can never forward more than $B$ times
on a given skip list level, without going down in the search;
and in the weakly bounded model, the first $i$ highest levels contain no more than $\sum_{j=0}^{i}  B^j$ elements. Optimality only holds for constant $B$. The weakly bounded model is related to a complete $B$-ary tree (similar to our complete binary tree), but there is no obvious or direct connection between our result and the weakly bounded optimality.
Due to the relationship between SLs and BSTs,
a dynamically optimal SL would imply a working set lower bound for BST.
Moreover, while both in their model and ours, proving the working set
property is key, the problems turn out to be fundamentally different.
In contrast to SLs,
$\ct$s revolve around \emph{unordered} (and balanced) trees
(that do not provide a simple search mechanism),
rely on a different reconfiguration operation (i.e., swapping or \emph{pushing}
an item to its parent comes at unit cost), and, as we show in this paper,
actually provide dynamic optimality for their general form. 
Finally, we note that~\cite{bose2008dynamic} (somewhat implicitly)
also showed that a random walk approach can achieve the working set
property; in our paper, we show that the working set property
can even be achieved deterministically and without maintaining MRU.

\noindent \textbf{Heaps and Paging.}
More generally, our work is also reminiscent of
\emph{online paging} models for \emph{hierarchies} of caches~\cite{yadgar2011management},
which aim to keep high-capacity nodes resp.~frequently accessed items
close to each other, however, without accounting for the reconfiguration
cost over time.
Similar to the discussion above, self-adjusting $\ct$s differ
from paging models in that in our model, items cannot move
arbitrarily and freely between levels (but
only between parent and child at unit cost).

\noindent \textbf{Self-Adjusting Networks.}
Finally, little is known about self-adjusting \emph{networks}.
While there exist several algorithms for the design
of \emph{static} demand-aware networks, e.g.~\cite{avin2017demand,infocom19dan,ancs18,singla2010proteus},
online algorithms which also
minimize reconfiguration costs are less explored.
The most closely related work to ours
are \emph{SplayNets}~\cite{infocom19splay,splaynet},
which are also based on a tree topology (but a searchable one).
However, \emph{SplayNets} do not provide any formal guarantees over time,
besides convergence properties in case of certain fixed demands.
\section{Conclusion}\label{sec:conclusion}

This paper presented a deterministic and a randomized online
algorithm for a fundamental building block of self-adjusting networked
systems based on reconfigurable topologies.
We believe that our paper opens several interesting
avenues for future research, e.g., related to the
design of fully decentralized and self-adjusting communication networks
based on more general topologies and serving more general communication
patterns.

\noindent \textbf{Acknowledgments.}
Research supported by the ERC Consolidator grant
\emph{AdjustNet} (agreement no.~864228).

\bibliographystyle{plain}
\bibliography{ref}

\section*{Appendix}

\section{Optimal Fixed Trees}\label{sec:static}

The key difference between binary search trees and
binary trees is that the latter provides more flexibilities in how
items can be arranged on the tree. Accordingly,
one may wonder whether more flexibilities will render the optimal
data structure design problem algorithmically simpler or harder.

In this section, we consider the static problem variant, and
investigate offline algorithms to compute optimal trees for a \emph{fixed}
frequency distribution over the items.
To this end, we assume
that for each item $v$, we are given a frequency $v.\freq$,
where $\sum_{v\in V} v.\freq = 1$.
\begin{definition}[Optimal Fixed Tree]\label{def:opt-stat}
We call a tree \emph{optimal static tree} if it minimizes the expected path length
$\sum_{i\in[1,n]} (v_i.\freq \cdot v_i.\depth)$.
\end{definition}

Our objective is to design an optimal static tree according to
Definition~\ref{def:opt-stat}.
Now, let us define the following notion of \emph{Most Frequently Used (MFU)} tree
which keeps items of larger empirical frequencies closer to the root:
\begin{definition}[MFU Tree]\label{def:mfu}
A tree in which for every pair of items $v_i,v_j \in V$,
it holds that if $v_i.\freq \geq v_j.\freq$ then
$v_i.\depth \leq v_j.\depth$,
is called \emph{MFU tree}.
\end{definition}

Observe that MFU trees are not unique but rather,
there are many MFU trees. In particular, the positions of
items at the same depth can be changed arbitrarily without violating the MFU properties.

\begin{theorem}[Optimal Fixed Trees]\label{thm:static-tree}
Any MFU tree is an optimal fixed tree.
\end{theorem}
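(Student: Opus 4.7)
The plan is to run a standard exchange (rearrangement-inequality) argument. I would start from an arbitrary optimal fixed tree $T^*$, and suppose for contradiction that $T^*$ is not an MFU tree. By negating Definition~\ref{def:mfu}, there must then exist two items $v_i, v_j \in V$ such that $v_i.\freq > v_j.\freq$ yet $v_i.\depth > v_j.\depth$ in $T^*$. I would form a new tree $T'$ by interchanging the host servers of $v_i$ and $v_j$ (the underlying server topology stays fixed; only the guest assignment changes). Writing $d_i = v_i.\depth$ and $d_j = v_j.\depth$ in $T^*$, only these two items change depth, so
\begin{equation*}
\cost(T^*) - \cost(T') = \bigl(v_i.\freq \cdot d_i + v_j.\freq \cdot d_j\bigr) - \bigl(v_i.\freq \cdot d_j + v_j.\freq \cdot d_i\bigr) = (v_i.\freq - v_j.\freq)(d_i - d_j) > 0,
\end{equation*}
contradicting the optimality of $T^*$. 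This forces every optimal fixed tree to satisfy the MFU property.

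Next I would argue that all MFU trees have the same expected path length, so that every MFU tree achieves the minimum. Order items so that $v_{(1)}.\freq \geq v_{(2)}.\freq \geq \cdots \geq v_{(n)}.\freq$. In a complete tree, the multiset of server depths is fixed. Definition~\ref{def:mfu} forces the depth assigned to $v_{(k)}$ to be the $k$-th smallest available depth (ties can be broken arbitrarily, but items with equal frequency contribute identically to the cost), so the sum $\sum_k v_{(k)}.\freq \cdot v_{(k)}.\depth$ is the same for every MFU tree. Since at least one optimal tree exists (the search space is finite) and must be MFU by the first step, and all MFU trees share this common cost, any MFU tree is optimal.

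There is essentially no hard step here; the only subtlety I would be careful about is handling ties in the frequencies, which Definition~\ref{def:mfu} already permits to be broken in any way, so the interchange argument goes through unchanged and the uniqueness of cost across MFU trees is immediate. No appeal to the online machinery of the paper is needed; this is a purely combinatorial rearrangement statement about the fixed complete tree.
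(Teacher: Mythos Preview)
Your proposal is correct and uses the same exchange argument as the paper. If anything, you are more careful than the paper: the paper's proof only explicitly establishes that every optimal tree must be MFU, whereas you additionally spell out why all MFU trees share the same expected path length, which is what is actually needed to conclude that \emph{any} MFU tree is optimal.
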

\begin{proof}
Recall that by definition, MFU trees have the property that
for all node pairs $v_i,v_j$:
$v_i.\freq> v_j.\freq \Rightarrow v_i.\depth \leq v_j.\depth$.
For the sake of contradiction, assume that there is a tree $T$ which
achieves the minimum expected path length but for which there
exists at least one item pair $v_i,v_j$ which violates our assumption,
i.e., it holds that
$v_i.\freq>v_j.\freq$ but $v_i.\depth > v_j.\depth$.
From this, we can derive a contradiction to the minimum
expected path length: by swapping the positions of items
$v_i$ and $v_j$, we obtain a tree $T'$ with an expected path length
which is shorter by
$\cost(T, \sigma)-\cost(T', \sigma)=$
$v_i.\freq \cdot v_i.\depth+v_j.\freq \cdot v_j.\depth -
v_i.\freq\cdot v_j.\depth+v_j.\freq\cdot v_i.\depth$
$> 0$.
\end{proof}

MFU trees can also be constructed
very efficiently, e.g., by performing the following \emph{ordered insertion}:
we insert the items into the tree $T$ in a top-down,
left-to-right manner, in descending order of their frequencies (i.e.,
item $v_i$ is inserted before item $v_j$ if
$v_i.\freq>v_j.\freq$).

\section{Stochastic Domination}

We recall some of the known results related to Stochastic Domination~\cite{shaked2007stochastic}.
\begin{definition}[Stochastic Domination]
Let $\XX$ and $\YY$ be two random variables, not necessarily
on the same probability space. The random variable $\XX$ is
{\em stochastically smaller than} $\YY$, denoted by
$\XX\preceq\YY$, if $\Prb[\XX> z]\leq\Prb[\YY> z]$ for every $z\in\Rnum$.
If additionally $\Prb[\XX>z]<\Prb[\YY>z]$ for some $z$,
then $\XX$ is {\em stochastically strictly less than} $\YY$,
denoted by $\XX\prec \YY$.
\end{definition}
\begin{theorem}[Stochastic Order]
\label{thm:stocDom}
Let $\XX$ and $\YY$ be two random variables,
not necessarily on the same probability space.
\begin{enumerate}
	\item Suppose $\XX\prec \YY$.
	Then $\Expct[U(\XX)]<\Expct[U(\YY)]$ for any
	strictly increasing function $U$.
	\item Suppose $\XX_1\prec \YY_1$
	and $\XX_2\prec\YY_2$, for four random
	variables $\XX_1, \YY_1,\XX_2$ and $\YY_2$.
	Then $a\XX_1 + b\YY_1 \prec a\XX_2 + b\YY_2$
	for any two constants $a,b>0$.
	\item Suppose $U$ is a non-decreasing function and
	$\XX\prec \YY$ then $U(\XX)\prec U(\YY)$.
	\item Given that $\XX$ and $\YY$ follow the binomial distribution,
	i.e., $\XX\sim Bn(n_1,p_1)$ and $\YY\sim Bn(n_2,p_2)$,
	then $\XX\preceq \YY$ if and only if the following two
	conditions holds: $(1-p_1)^{n_1}\geq(1-p_2)^{n_2}$
	and $n_1\leq n_2$.
	\end{enumerate}
\end{theorem}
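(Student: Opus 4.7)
The plan is to address each of the four parts separately, relying on standard tools from stochastic order theory: the layer-cake / integration-by-parts representation of expectations, preimage analysis for the survival function, and explicit couplings via Strassen's theorem. Throughout I will use that the hypothesis $\XX\prec\YY$ means $\Prb[\XX>z]\le \Prb[\YY>z]$ for every $z$, with strict inequality at some $z_0$.

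For \textbf{Part 1}, I would rewrite the difference via the integration-by-parts identity
\[
  \Expct[U(\YY)]-\Expct[U(\XX)] = \int \bigl(\Prb[\YY>z]-\Prb[\XX>z]\bigr)\,dU(z),
\]
valid for monotone $U$ and finite expectations. The integrand is non-negative by hypothesis, and right-continuity of the survival functions promotes the strict inequality at $z_0$ to strict inequality on a non-degenerate interval $I$. Since $U$ is strictly increasing, $dU$ gives positive mass to $I$, so the integral is strictly positive. For \textbf{Part 3}, monotonicity of $U$ makes every preimage $\{x:U(x)>z\}$ an interval of the form $(u_z,\infty)$ or $[u_z,\infty)$ with $u_z=\inf\{x:U(x)>z\}$, so $\Prb[U(\XX)>z]$ reduces to a tail probability of $\XX$. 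Applying $\XX\prec \YY$ to this tail transfers the domination term by term; strictness at some image-level of $U$ is then secured by choosing $z$ whose preimage captures the witness $z_0$.

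For \textbf{Part 2}, I will read the stated inequality as the natural $a\XX_1+b\XX_2\prec a\YY_1+b\YY_2$ under the hypothesis $\XX_1\prec\YY_1$ and $\XX_2\prec\YY_2$ (the formulation as written appears to mix indices; the content-bearing claim pairs each $\XX_i$ against $\YY_i$ and combines them with positive weights). Strassen's theorem produces, on a common probability space, independent coupled pairs $(\XX_1',\YY_1')$ and $(\XX_2',\YY_2')$ with the correct marginals, $\XX_i'\le\YY_i'$ almost surely, and strict inequality on a positive-probability event. Then $a\XX_1'+b\XX_2'\le a\YY_1'+b\YY_2'$ pointwise with strict inequality on the union of the two strictness events, yielding the desired distributional domination.

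The \textbf{main obstacle} is \textbf{Part 4}, and specifically the sufficiency direction. Necessity is straightforward: $\Prb[\XX\le 0]\ge \Prb[\YY\le 0]$ gives $(1-p_1)^{n_1}\ge(1-p_2)^{n_2}$, and $n_1>n_2$ with $p_1>0$ would force $\Prb[\XX\ge n_2+1]>0=\Prb[\YY\ge n_2+1]$, contradicting $\XX\preceq \YY$. For sufficiency the plan is to factor the coupling through an intermediate parameter $p'$ defined by $(1-p')^{n_2}=(1-p_1)^{n_1}$; the hypotheses then force $p'\le p_2$, so $Bn(n_2,p')\preceq Bn(n_2,p_2)$ is immediate by monotonicity in the success probability. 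It then remains to couple $\XX\sim Bn(n_1,p_1)$ with $\YY'\sim Bn(n_2,p')$ so that $\XX\le \YY'$ almost surely. Writing $\YY'=\sum_{j=1}^{n_2}B_j$ with i.i.d.\ $B_j\sim\text{Bernoulli}(p')$ and, supposing for now that $n_2/n_1$ is an integer, partitioning the indices into $n_1$ blocks of size $n_2/n_1$, set $\XX_i=\mathbf{1}[\text{block } i \text{ contains at least one success}]$. Each $\XX_i$ is Bernoulli with parameter $1-(1-p')^{n_2/n_1}=1-(1-p_1)=p_1$, the blocks are independent, and $\sum\XX_i\le\sum B_j=\YY'$ pointwise. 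The delicate point is when $n_2/n_1$ is not an integer: here the clean partition is replaced by a mixture of $\lceil n_2/n_1\rceil$- and $\lfloor n_2/n_1\rfloor$-sized blocks supplemented by auxiliary independent randomness, calibrated so that every $\XX_i$ still has exactly a $\text{Bernoulli}(p_1)$ distribution while the pointwise bound $\sum\XX_i\le\YY'$ is preserved. Chaining the two couplings through $p'$ closes the proof.
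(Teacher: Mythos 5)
The paper itself offers no proof of this theorem --- it is presented in the appendix as a recollection of standard facts and is cited directly to Shaked and Shanthikumar's monograph --- so there is no in-paper argument to compare against. Assessing your proposal on its own terms: Parts 1--3 use the right standard machinery, but with two caveats. In Part 3, the strict conclusion $U(\XX)\prec U(\YY)$ is simply false for a general non-decreasing $U$: take $U$ constant, and then $U(\XX)$, $U(\YY)$ have identical laws. Your step ``choosing $z$ whose preimage captures the witness $z_0$'' tacitly assumes $U$ is non-constant near $z_0$, which non-decreasingness does not supply; the correct conclusion is $U(\XX)\preceq U(\YY)$, and you should either weaken the claim or flag the statement as imprecise. (Part 1 is fine because there $U$ is strictly increasing, so $dU$ charges every nondegenerate interval.) In Part 2 you rightly notice the displayed inequality garbles the indices, but your Strassen construction quietly imposes an independence hypothesis that is genuinely necessary: marginal domination alone does not control the sum (e.g.\ $\XX_1,\XX_2$ i.i.d.\ Bernoulli$(1/2)$ and $\YY_1=\YY_2$ a single Bernoulli$(1/2)$ give equal marginals but $\XX_1+\XX_2\not\preceq \YY_1+\YY_2$). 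This assumption should be made explicit rather than smuggled in through the coupling.

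The substantive gap is Part 4, sufficiency, when $n_1\nmid n_2$. Your block coupling is correct and clean for $n_1\mid n_2$, but the proposed repair --- a mixture of floor- and ceiling-sized blocks plus ``auxiliary independent randomness'' --- cannot be carried out as described. Any $X_i$ that is pointwise dominated by the success-indicator of its own block and has law Bernoulli$(p_1)$ with $p_1=1-(1-p')^{n_2/n_1}$ must sit on a block of size at least $\lceil n_2/n_1\rceil$: a smaller block of size $m_i$ gives $\Prb[X_i=1]\le 1-(1-p')^{m_i}<p_1$, and no extra randomness can lift this without allowing $X_i=1$ while the block has no success, breaking the pointwise bound $\sum X_i\le\sum B_j$. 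But $n_1$ disjoint blocks each of size $\ge\lceil n_2/n_1\rceil$ exceed $n_2$ elements unless $n_1\mid n_2$, so the blocks must overlap and the $X_i$'s lose independence, and $\sum X_i$ is no longer binomial. The result is true, and a monotone coupling does exist, but establishing it in the non-divisible case is the hard part of the theorem; the standard proof (Klenke--Mattner, 2010) argues analytically on tail-probability ratios rather than by a coupling of this form. As written, this step of your Part 4 is a sketch of a construction that does not close, so the sufficiency direction remains unproved.
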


\end{document}